\newcolumntype{S}{>{\centering\arraybackslash} m{.4\linewidth} }
\newcommand{\R}{\mathbb{R}}
\newcommand{\E}{\operatorname{\mathbb{E}}}
\newcommand{\Var}{\operatorname{Var}}
\newcommand{\I}{\mathbb{I}}
\renewcommand{\Pr}{\mathbb{P}}
\newcommand{\balpha}{\boldsymbol{\alpha}}
\newcommand{\bvarphi}{\boldsymbol{\varphi}}
\newcommand{\btau}{\boldsymbol{\tau}}
\newcommand{\bsigma}{\boldsymbol{\sigma}}
\newcommand{\PhiT}[3]{\varPhi_{#1,#3}^{#2}}
\newcommand{\bPhiT}{\boldsymbol{\mathrm{\Phi}}}
\newcommand{\nPhiT}[3]{\hat{\varPhi}_{#1,#3}^{#2}}
\newcommand{\bnPhiT}{\boldsymbol{\hat{\Phi}}}
\newcommand{\bnPsiT}{\boldsymbol{\hat{\Psi}}}
\newcommand{\nPsiT}[3]{\hat{\Psi}_{#1,#3}^{#2}}
\renewcommand{\vec}[1]{\mathbf{#1}}
\newcommand{\mat}[1]{\mathbf{#1}}
\newcommand{\abs}[1]{\left\vert#1\right\vert}
\newtheorem{thm}{Theorem}[section]
\newtheorem{corol}[thm]{Corollary}
\newtheorem{lemma}[thm]{Lemma}
\newtheorem{prop}[thm]{Proposition}
\theoremstyle{definition}
\theoremstyle{remark}
\newtheorem{remark}[thm]{Remark}
\def\ITMProbe{\textsf{{\footnotesize\itshape ITM Probe}}}
\begin{document}
\begin{titlepage}
\begin{center}
{\Large\bf Information flow in interaction networks II: channels, path lengths and potentials}
\end{center}
\vspace{.35cm}

\begin{center}
{\large Aleksandar Stojmirovi\'c\, and Yi-Kuo Yu\footnote{to whom correspondence should be addressed}}
\vspace{0.25cm}
\small

\par \vskip .2in \noindent
National Center for Biotechnology Information\\
National Library of Medicine\\
National Institutes of Health\\
Bethesda, MD 20894\\
United States
\end{center}

\normalsize
\vspace{0.25cm}

\begin{abstract}
In our previous publication, a framework for information flow in interaction networks
based on random walks with damping was formulated with two fundamental modes:
emitting and absorbing. While many other network analysis methods based on
random walks or equivalent notions have been developed before and after our
earlier work, one can show that they can all be mapped to one of the two modes.
In addition to these two
fundamental modes, a major strength of our earlier formalism was its accommodation
of context-specific {\it directed} information flow that yielded plausible and
meaningful biological interpretation of protein functions and pathways.
However, the directed flow from origins to destinations
was induced via a potential function that was heuristic.
Here, with a theoretically sound approach called the \emph{channel mode},
we extend our earlier work for directed information flow. This is achieved by
constructing a potential function facilitating a purely probabilistic interpretation of the channel mode.
For each network node, the channel mode combines the solutions of emitting
and absorbing modes in the same context, producing what we call a \emph{channel tensor}.
The entries of the channel tensor at each node can be interpreted as the amount of flow
passing through that node from an origin to a destination. Similarly to our earlier model,
the channel mode encompasses damping as a free parameter that controls
the locality of information flow. Through examples involving the yeast pheromone
response pathway, we illustrate the versatility and stability of our new framework.
\end{abstract}
\end{titlepage}

\section{Introduction}

Biological pathways in protein interaction networks have been modelled \citep{TWACS06,SY07,SBKEI08} as information flow or equivalently random walks between pathway origins and destinations. Ideally, the nodes visited by the flow should suggest a mechanism for the pathway being investigated. For biological specificity of the results, it is important that the flow is directed and localized, that is, the random walks should follow more direct paths from origins to destinations, as opposed to wandering around the whole network. Otherwise, if pathway origins and destinations are distant, many proteins (particularly large network hubs) unrelated to the pathway's biological function may appear as significant. It is therefore necessary to construct a model that is able to controllably pull the information flow towards the pathway destinations.

In an earlier paper~\citep{SY07}, we developed a mathematical framework that is capable of directing information flow in interaction networks based on random walks. Via information damping/aging, this framework naturally accommodates information loss/leakage that always occurs in all networks. It requires no prior restriction to the sub-network of interest nor it uses additional (and possibly noisy) information. The framework consisted of two modes \emph{absorbing} and \emph{emitting}. Given a set of information \emph{sinks}, the absorbing mode returns for any network node the likelihood of a random walk starting at that node to terminate at sinks. The emitting mode returns for each network node the expected number of visits to that node by a random walk starting at information \emph{sources}. The emitting mode can also be used to model biological pathways: given sources and selected destinations (pseudosinks), we introduced heuristic potential functions that adjust the weights of network links to guide the information flow towards pseudosinks~\citep{SY07}.

Although the introduction of potential to direct information flow is novel, the concepts of diffusion and random walks have been extensively used for analysis of protein interaction networks. \citet{NJAC05} introduced an algorithm that used truncated diffusion from nodes in interactomes to predict protein function. \citet{TWACS06} used simulations of random walks to infer gene regulatory pathways, while \citet{SBKEI08} modelled the interactome as an electrical network to interpret expression quantitative loci (eQTLs). The latter two approaches are conceptually similar due to the correspondence between random walks on (undirected) graphs and electrical networks~\citep{DS84}. \citet{MLZR09} used the electrical network approach to measure network centrality of each node in several interactomes. \citet{VTX09} proposed a spectral measure of closeness between two proteins based on PageRank to discover functionally related proteins. Most efforts in this direction -- for example, the methods proposed by \citet{SBKEI08}, \citet{MLZR09} and \citet{VTX09} -- can be mapped to our absorbing and emitting modes, without potentials (see Section \ref{subsec:oldinterpret} for details).

While our earlier model provides very reasonable results on many examples from yeast protein-protein interaction networks~\citep{SY07}, it also has room for improvement. Absent a theory, the potential functions were empirically chosen and the optimal potentials became example-dependent. That is, different potentials might be needed for different networks, sources and pseudosinks. Consequently, the model values (visits) for each node can not be directly interpreted but only in relation to each other. Furthermore, since each choice of the origins and destinations results in a different network graph, rapid computation at large-scale is hindered.

In this sequel, we present a major extension of our previous framework. By appropriately combining the emitting and absorbing modes, we have devised a new, \emph{channel}, mode that permits directed information flow with probabilistic interpretation. The manuscript is structured as follows. Section \ref{sec:techbkgrnd} presents a succinct review of our previous work and shows how other proposed methods can be mapped to its absorbing or emitting mode. Section \ref{sec:theory} details our extension. Section \ref{sec:application} discusses applications of the channel mode to protein interaction networks using the yeast pheromone response pathway as an example. Discussion and conclusions are in Section \ref{sec:discussion}, with more technical details provided in the Appendices.

\section{Technical Background}\label{sec:techbkgrnd}

\subsection{Preliminaries}

We will closely follow the notation of \citet{SY07}. We represent an interaction network as a weighted directed graph $\Gamma=(V,E,w)$ where $V$ is a finite set of vertices of size $n$, $E\subseteq V\times V$ is a set of edges and $w$ is a non-negative real-valued function on $V\times V$ that is positive on $E$, giving the weight of each edge (the weight of non-existing edge is defined to be $0$). Assuming an ordering of vertices in $V$, we represent a real-valued function on $V$ as a state (column) vector $\vec{\bvarphi}\in\R^n$ and the connectivity of $\Gamma$ by the \emph{weight} matrix $\mat{W}$ where $W_{ij}=w(i,j)$ (the weight of an edge from $i$ to $j$). We do not make distinction between a vertex $v\in V$ and its corresponding state given by a particular ordering of vertices.
Denote by $\mat{P}$ the $n\times n$ matrix such that for all $i,j\in V$,
\begin{equation}\label{eq:pmatrix}
P_{ij} = \frac{\alpha_i W_{ij}}{\sum_k W_{ik}},
\end{equation}
when $\sum_{k\in V} W_{ik} > 0$ and $P_{ij}=0$ otherwise. Here $\alpha_i\in(0,1]$ for all $i$.

When $\alpha_i=1$ for all $i$, the matrix $\mat{P}$ is a transition matrix for a random walk or a Markov chain on $\Gamma$: for any pair of vertices $i$ and $j$, $P_{ij}$ gives the transition probability from vertex $i$ to vertex $j$ in one time step. In the general case, the node-specific damping factors $\alpha_i$ model \emph{dissipation} of information: at each step of the random walk there is some probability that the walk leaves the graph. The value $\alpha_i$ measures the likelihood for the random walk leaving the vertex $i$ to remain in the graph, or equivalently, the likelihood of dissipation at $i$ is $1-\alpha_i$.

For this paper, it will be convenient to express dissipation in terms of the uniform damping coefficient $\mu$, where 
\begin{equation}
\mu = \max_{i} \alpha_i.
\end{equation}
Let $a_i=\alpha_i/\mu$ and define the matrix $\mat{Q}$ by $\mat{P}=\mu\mat{Q}$, that is,
\begin{equation}\label{eq:pmatrix1}
Q_{ij} = \frac{a_i W_{ij}}{\sum_k W_{ik}},
\end{equation}
for $i,j\in V$ by and $0<a_i\leq 1$. We will consider $\mu$ as a free parameter in $(0,1]$ and the matrix $\mat{P}$ as dependent on $\mu$.

\subsection{Emitting and absorbing modes}

We extract the properties of information flow through a given network by examining the paths of discrete random walks. A random walker starts at an originating node, chosen according to the application domain, and traverses the network, visiting a node at each step. Each walk terminates at an explicit  \emph{boundary} vertex or due to dissipation, which is modeled as reaching an implicit (out-of-network) boundary node. 

We distinguish two types of boundary nodes: \emph{sources} and \emph{sinks}. Sources emit information, that is, serve as the origins of random walks. All information entering a source from inside the network is dissipated, so a walker is not allowed to visit the source more than once. Sinks absorb information, serving as destinations of walks; information leaving each sink is completely dissipated. The network graph together with a set of boundary nodes and a vector of damping factors $\balpha$ provides the \emph{context} for the information flow investigated.

The main variable of interest is the (averaged) number of times a vertex is visited by a random walk given the context. Let $D$ denote the set of selected boundary nodes, let $T=V\setminus D$ and let $m=\abs{T}$. Assuming that the first $n-m$ states correspond to vertices in $D$, we write the matrix $\mat{P}$ in the canonical block form:
\begin{equation}\label{eqn:Pcannonical}
\mat{P}=\left[ \begin{array}{cc}\mat{P}_{DD} & \mat{P}_{DT}\\ \mat{P}_{TD} & \mat{P}_{TT}\end{array}\right].
\end{equation}
Here $\mat{P}_{AB}$ denotes a matrix giving probabilities of moving from nodes in $A$ to nodes in $B$ 
where $A,B$ stand for either $D$ or $T$. 
The states (vertices) belonging to the set $T$ are called \emph{transient}.

\subsubsection{Absorbing mode}\label{subsec:absorbing}

Suppose that the boundary set $D$ consists only of sinks. Let $\mat{F}$ denote an $m \times (n-m)$ matrix such that $F_{ij}$ is the total probability that the information originating at $i\in T$ is absorbed at $j\in D$. The matrix $\mat{F}$ is found by solving the discrete Laplace equation
\begin{equation}\label{eqn:sink2}
(\I-\mat{P}_{TT})\mat{F} = \mat{P}_{TD},
\end{equation}
where $\I$ denotes the identity matrix. The matrix $\Delta(\mat{P}_{TT})=\I-\mat{P}_{TT}$ is known as the discrete Laplace operator of the matrix $\mat{P}_{TT}$. If $\I-\mat{P}_{TT}$ is invertible, Equation (\ref{eqn:sink2}) has a unique solution
\begin{equation}\label{eqn:sink4}
\mat{F} = \mat{G}\mat{P}_{TD},
\end{equation}
where $\mat{G}=(\I-\mat{P}_{TT})^{-1}$. 

\subsubsection{Emitting mode}\label{subsec:emitting}

Now consider the dual problem where $D$ is a set of sources. Let $\mat{H}$ denote an $(n-m) \times m$ matrix such that $H_{ij}$ is the total expected number of times the transient vertex $j$ is visited by a random walk emitted from source $i$ (for all times). Again, $\mat{H}$ is found by solving the discrete Laplace equation
\begin{equation}\label{eqn:source2}
\mat{H}(\I-\mat{P}_{TT}) = \mat{P}_{DT}.
\end{equation}
which, if $\I-\mat{P}_{TT}$ is invertible, has a unique solution
\begin{equation}\label{eqn:source4}
\mat{H} = \mat{P}_{DT}\mat{G}.
\end{equation}

It is easy to show \citep{SY07} that the matrix $\mat{G}=(\I-\mat{P}_{TT})^{-1}$, also known as the Green's function or the fundamental matrix of an absorbing Markov chain \citep{KS76}, exists if every node can be connected to a boundary node or if $\alpha_i<1$ for all $i$. The entry $G_{ij}$ represents the mean number of times the random walk reaches vertex $j\in T$ having started in state $i\in T$ \citep{KS76}. For any transient state $i$, the value
\begin{equation}
T_i = \sum_{j\in T} G_{ij}
\end{equation}
gives the average length of a path traversed by a random walker starting at $i$ before terminating~\citep{KS76}. In this case, the walker is allowed to revisit $i$ after leaving $i$. In the Markov chain theory, $T_i$ is also known as the average absorption time from $i$. For the emitting mode, where the walker starts at $s\in S$ and cannot revisit it, it can be shown that the average path length is
\begin{equation}
T_s = 1+\sum_{j\in T} H_{sj}
\end{equation}

\subsection{Interpretations}\label{subsec:oldinterpret}

If we assume that a random walk deposits a fixed amount of information content each time it visits a node, we can interpret $H_{ij}$ is the overall amount of information content originating from the source $s$ deposited at the transient vertex $j$. Furthermore, we can interpret $F_{ij}$ as the sum of probabilities (weights) of the paths originating at the vertex $i\in T$ and terminating at the vertex $j\in D$ while avoiding all other boundary nodes in the set $D$, and $H_{ij}$ as the sum of probabilities (weights) of the paths originating at the vertex $i\in D$ and terminating at the vertex $j\in T$, also avoiding all other nodes in the set $D$. Each such path has a finite but unbounded length. However, unlike $F_{ij}$, $H_{ij}$ does not represent a probability because the events of the information being located at $j$ at the times $t$ and $t'$ are not mutually exclusive (a random walk can be at $j$ at time $t$ and revisit it at time $t'$). For $F_{ij}$, the absorbing events at different times are mutually exclusive.

The entry $H_{ij}$ can alternatively be interpreted as equilibrium information content at $j$ for information flow originating from $i$. In this case we imagine the flow entering the network at node $i$ and leaving the network at $i$ and any other node due to dissipation. The amount of inflow at $i$ is set to $1$ and $H_{ij}$ denotes the steady state content at $j$. Hence, the \emph{equilibrium flow rate} through an edge $(i,j)$ by the flow entering at $s\in D$, denoted $\psi_s(i,j)$, is 
\begin{equation}
\psi_s(i,j) = H_{si}P_{ij}.
\end{equation}

\subsubsection{Electrical networks and heat conduction}

A weighted undirected graph $\Gamma=(V,E,w)$ can  be considered as an electrical network with each edge weight $(i,j)$ being associated with resistance $R_{ij}=1/W_{ij}$. \citet{DS84} have shown that voltages and currents through nodes and edges can be interpreted in terms of random walks with transition matrix $\mat{P}$ (where $\alpha_i=1$ for all $i\in V$) and absorbing boundary. Let $\vec{f}$ denote the voltage vector over all nodes and suppose that a unit voltage is applied between two nodes $a$ and $b$, so that $f_a=1$ and $f_b=0$. Then, the solution for $\vec{f}$ over $T=v\setminus\{a,b\}$ according to Kirchhoff's laws is equivalent to the $a$-th column of the absorbing mode matrix $\mat{F}$, that is, $f_i = F_{ia}$. The current flowing through an edge $(i,j)$, which we denote $I_{ij}$, is then given by 
\begin{equation}
I_{ij} = \frac{f_i - f_j}{R_{ij}} = (F_{ia}-F_{ja})W_{ij}.
\end{equation}
Therefore, modeling protein interaction networks as resistor networks is equivalent to applying our absorbing mode without dissipation. 

However, electrical network paradigm is only applicable to interaction networks where all links can be modeled as undirected edges. This is the case in \citep{MLZR09}, where the authors only take physical interactions between proteins as links in their networks. On the other hand, the network constructed by 
\citet{SBKEI08} contained, in addition to physical interactions, the transcription factor-to-gene interactions. These interactions were modeled as directed edges and \citet{SBKEI08} applied a heuristic approach to model the current flowing through them. In contrast, our absorbing mode can be directly applied to directed networks, although the columns of the matrix $\mat{F}$ cannot be interpreted as voltages (Figure~\ref{fig:example1}). We will show in \ref{subsubsec:operator} that, even when edges are directed, $\mat{F}$ gives rise to potentials.

\begin{figure}[hp]
\begin{center}
\includegraphics{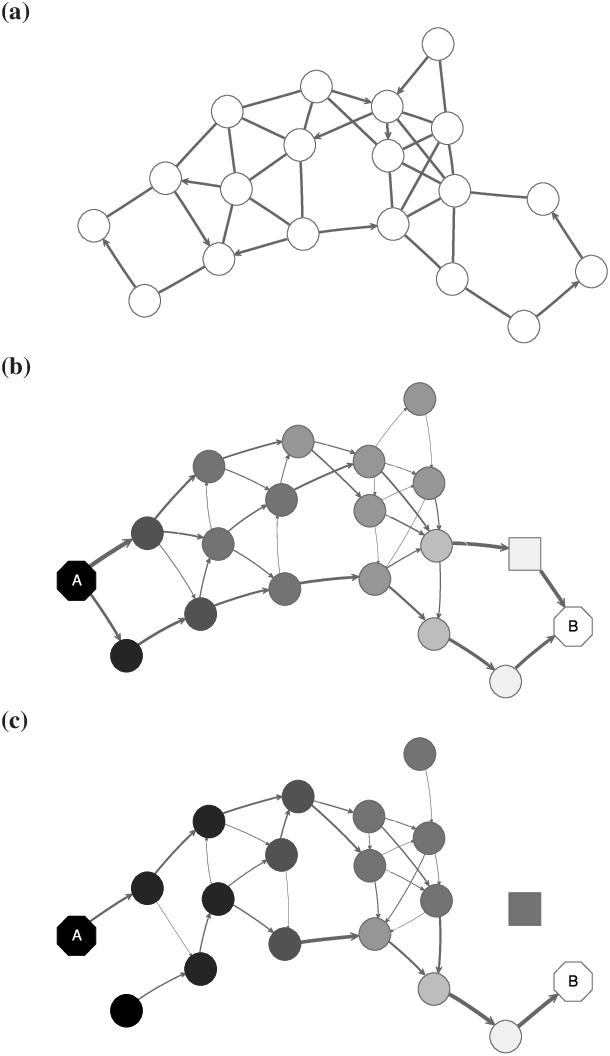}
\caption[Absorbing mode formalism can be extended beyond resistor networks.]{Absorbing mode formalism can be extended beyond resistor networks. Consider, for example, the directed graph shown in \textbf{(a)}, where all edges, directed and undirected have weight 1. This graph can be modeled as a resistor network by treating all edges as undirected:  \textbf{(b)}. Applying a unit voltage at node A and grounding at node B leads to the current flowing from A to B. The voltage at each node is indicated by shading (dark means high voltage) while the current at each edge is indicated by the thickness and the direction of the arrow corresponding to that edge. The equivalents of voltage and current can be obtained for the original graph using the absorbing mode with the same boundary: \textbf{(c)}. Note the qualitative difference between the results in \textbf{(b)} and \textbf{(c)}: the node shaped as square conducts significant current in \textbf{(b)} but is totally isolated in \textbf{(c)}.  }\label{fig:example1}
\end{center}
\end{figure}

\citet{ZBY07} applied the same formalism without damping to social networks as a recommendation model. They consider a graph $\Gamma$ corresponding to a social network, where items under consideration are mapped to nodes, as a heat conduction medium and interpret $f$ as temperature. For each recomendee, by setting his/her favorite items to `high-temperature' and disliked items to `low-temperature' and solving for $f$ over the remaining nodes, they obtain the heat distribution over the entire $\Gamma$. The values of $f$ can be used to recommend potential interesting items (other high temperature nodes) to individuals.

\subsubsection{Topic-sensitive PageRank} Topic-sensitive PageRank was introduced by \citet{Haveliwala03} as a context sensitive algorithm for web search and has been recently applied to protein interaction networks by \citet{VTX09}. The PageRank vector $\vec{p}$ is defined as the unique solution of the equation
\begin{equation}
\vec{p} = \beta\vec{s} + (1-\beta)\vec{p}\mat{M},
\end{equation}
where $\mat{M}$ is the transition matrix for a graph (i.e. $\sum_{j\in V} M_{ij}=1$), $0<\beta < 1$ and $\vec{s}$ is a probability vector ($\sum_j s_j = 1$). The vector $\vec{p}$ is interpreted as the steady state for the random walk governed by $\mat{M}$, which at each step has probability $\beta$ of restarting at a different node. The probability of restarting at the node $j$ is $s_j$. Clearly, $\vec{p}$ can be written as
\begin{equation}\label{eqn:pagerank2}
\vec{p} = \beta\vec{s}(\I-(1-\beta)\mat{M})^{-1}.
\end{equation}
PageRank can be considered a special case of the emitting mode in the following way. Add an additional vertex $v$ to the graph with no incoming edges and with the weight of each outgoing edge $v\to i$ proportional to $s_i$. Construct a matrix $\mat{P}$ using $\alpha_i = 1-\beta$ for all $i$ in the original graph and $\alpha_v=\beta$. Let $D=\{v\}$ be the boundary set. Clearly, $(1-\beta)\mat{M}=\mat{P}_{TT}$ and $\beta\vec{s}=\mat{P}_{DT}$, and hence Equation~\eqref{eqn:pagerank2} reduces to Equation~\eqref{eqn:source2}.

\subsubsection{Other methods based on random walks}

Beyond the analysis of protein interaction networks, approaches based on diffusion and random walks have received attention for a number of applications. We will only mention here a few examples from machine learning to illustrate the point.

A \emph{kernel} on a space $X$ is a symmetric positive (semi)definite map $\kappa:X\times X\to \R$, which can be used to measure similarity between two points in $X$. A kernel can naturally be treated as an inner product on some feature space. Among other approaches, kernels are the foundation of Support Vector Machines (SVMs), machine learning methods widely used for classification and pattern recognition of data~\citep{SS02,STV04}. 

A variety of kernels were proposed to compare nodes in undirected graphs~\citep{FYPS06}, mostly derived from discrete Laplacians. Recall that we called the matrix $\Delta(\mat{P}_{TT})=\I-\mat{P}_{TT}$ the discrete Laplace operator of the matrix $\mat{P}_{TT}$. One can similarly define the matrices $\Delta(\mat{W})=\I-W$ and $\Delta(\mat{P})=\I-\mat{P}$, where $W$ is the adjacency matrix and $\mat{P}$ is the transition matrix for a weighted undirected graph $\Gamma$. Both $\Delta(\mat{W})$ and $\Delta(\mat{P})$ were sometimes called the graph Laplacians for $\Gamma$.

Generally, the matrix $\Delta(\mat{W})$ need not be invertible (in particular, $\Delta(\mat{P})$ is not invertible -- see \citep{ZBY07}). \citet{FPRS07} proposed using the Moore-Penrose pseudoinverse, which generalizes a matrix inverse to matrices of less than full rank, of $\Delta(\mat{W})$ as a kernel, with applications to collaborative recommendation. The approach and the application domain of \citet{FPRS07} are similar to that of \citet{ZBY07}.

The von Neumann diffusion kernel~\citep{SS02}, proposed by \citet{Katz53} has the form
\begin{equation}
\kappa = \sum_{n=1}^\infty \beta^n [\mat{W}^n] = (\I - \beta\mat{W})^{-1} - \I,
\end{equation}
where $\beta$ is a damping factor chosen so that $(\I - \beta\mat{W})^{-1}$ exists. This approach is roughly similar to ours where we compute $\mat{G} = (\I - \mu\mat{Q}_{TT})^{-1}$ in that both $\kappa_{ij}$ and $G_{ij}$ include the sums of the weights for all paths from $i$ to $j$. The main difference between the two approaches is that the weight of each path of length $n$ included in $\kappa$ is the product of weights of each link followed, while in our case it is the product of probabilities and therefore has a probabilistic interpretation.

Exponential diffusion kernels, introduced by \citet{KL02}, are defined by
\begin{equation}
\kappa = \sum_{n=0}^\infty \frac{\beta^k(-\Delta(\mat{W}))^k}{k!} = \exp(-\beta\Delta(\mat{W})),
\end{equation}
with a real parameter $\beta$. Diffusion kernels can be interpreted to model continuous diffusion through graph, with infinitesimal time steps in contrast to discrete-time diffusion implied by von Neumann diffusion kernel and other similar random-walk based methods. Note that, since every kernel is required to be symmetric, the above formalizations do not extend directly to directed graphs.

\section{Theory}\label{sec:theory}

Assume $V=S\sqcup T\sqcup K$, where the set $S$ denotes the sources, $K$ denotes the sinks and $T$ the transient nodes and write the matrix $\mat{P}$ in the block form as
\begin{equation}
\mat{P}=\left[ \begin{array}{ccc}\mat{P}_{SS} & \mat{P}_{ST} & \mat{P}_{SK}\\ \mat{P}_{TS} & \mat{P}_{TT} & \mat{P}_{TK}\\ \mat{P}_{KS} & \mat{P}_{KT} & \mat{P}_{KK} \end{array}\right].
\end{equation}
Let us modify (add context to) the underlying graph $\Gamma$ so that the random walk can only leave the sources and only enter the sinks. Furthermore, no communication is allowed among sources or among sinks without going through transient nodes. The modified transition matrix, denoted $\mat{\tilde{P}}$ has the form
\begin{equation}
\mat{\tilde{P}}=\left[ \begin{array}{ccc}\mat{0} & \mat{P}_{ST} & \mat{P}_{SK}\\ \mat{0} & \mat{P}_{TT} & \mat{P}_{TK}\\ \mat{0} & \mat{0} & \mat{0} \end{array}\right].
\end{equation}

Treating the vertices in $S$ and $T$ as transient for the absorbing mode in \ref{subsec:absorbing}, we first derive the matrix $\mat{F}$ (of size $\abs{S\cup T}\times\abs{K}$):
\begin{align*}
\mat{F} &= \left(\I- \left[ \begin{array}{cc}\mat{0} & \mat{P}_{ST}\\ \mat{0} & \mat{P}_{TT}\end{array}\right]  \right)^{-1} \left[ \begin{array}{c} \mat{P}_{SK}\\ \mat{P}_{TK}\end{array}\right] = \left[ \begin{array}{cc}\I & \mat{P}_{ST}\mat{G}\\ \mat{0} & \mat{G}\end{array}\right] \left[ \begin{array}{c} \mat{P}_{SK}\\ \mat{P}_{TK}\end{array}\right]  \\
& = \left[ \begin{array}{cc} \mat{P}_{SK} + \mat{P}_{ST}\mat{G}\mat{P}_{TK} & \mat{G}\mat{P}_{TK}\end{array}\right]^T,
\end{align*}
where, as before, $\mat{G}=(\I-\mat{P}_{TT})^{-1}$. Similarly, treating the vertices in $T$ and $K$ as transient for the emitting mode in \ref{subsec:emitting}, we derive the matrix $\mat{H}$ (of size $\abs{S}\times\abs{T\cup K}$):
\begin{align*}
\mat{H} &= \left[ \begin{array}{cc} \mat{P}_{ST} & \mat{P}_{SK}\end{array}\right] \left(\I- \left[ \begin{array}{cc}\mat{P}_{TT} & \mat{P}_{TK}\\ \mat{0} & \mat{0} \end{array}\right]  \right)^{-1} 
= \left[ \begin{array}{cc} \mat{P}_{ST} & \mat{P}_{SK}\end{array}\right] \left[ \begin{array}{cc}\mat{G} & \mat{G}\mat{P}_{TK}\\ \mat{0} & \mat{\I} \end{array}\right] \\
&= \left[ \begin{array}{cc} \mat{P}_{ST}\mat{G} & \mat{P}_{ST}\mat{G}\mat{P}_{TK}+\mat{P}_{SK} \end{array}\right].
\end{align*}

The entries of $\mat{F}$ and $\mat{H}$ are, as before, interpreted as probabilities of absorption at sinks and average numbers of visits of walks emitted from sources, respectively. Note that the same Green's function, $\mat{G}=(\I-\mat{P}_{TT})^{-1}$,  needs to be computed for both solutions. Also note that the `$S$' rows of $\mat{F}$ form the transpose of the `$K$' columns of $\mat{H}$, that is, for all $s\in S$ and $k\in K$, $F_{sk}=H_{sk}$.

The matrices $\mat{F}$ and $\mat{H}$ can be extended over the whole graph into the matrices $\mat{\bar{F}}$ and $\mat{\bar{H}}$, of sizes $n\times\abs{K}$ and $\abs{S}\times n$, respectively by setting $\bar{F}_{kk'}=\delta_{kk'}$ for $k,k'\in K$ and $\bar{H}_{ss'}=\delta_{ss'}$ for $s,s'\in S$. This is equivalent to setting the $K$ portion of $\mat{\bar{F}}$ and $S$ portion of $\mat{\bar{H}}$ to appropriately sized identity matrices:
\begin{align}
\mat{\bar{F}}& = \left[ \begin{array}{ccc} \mat{P}_{SK} + \mat{P}_{ST}\mat{G}\mat{P}_{TK}, & \mat{G}\mat{P}_{TK}, & \I \end{array}\right]^{T}\label{eq:barF}\\
\mat{\bar{H}}&= \left[ \begin{array}{ccc}\I, & \mat{P}_{ST}\mat{G}, & \mat{P}_{ST}\mat{G}\mat{P}_{TK}+\mat{P}_{SK} \end{array}\right]\label{eq:barH}
\end{align}
The matrices $\mat{\bar{F}}$ and $\mat{\bar{H}}$ contain explicit boundary conditions with interpretations straightforwardly extended from $\mat{{F}}$ and $\mat{{H}}$. Specifically, $\bar{F}_{kk'}=\delta_{kk'}$ means that a random walk originating from a sink cannot move anywhere else, while $\bar{H}_{ss'}=\delta_{ss'}$ implies that a random walk starting at a source will visit it exactly once and cannot return to it nor to any other source.

\begin{remark}
We explicitly assumed that a boundary node can either be a source or a sink. Sometimes, it is desirable to examine flows that both start and terminate at the same point. This case can be reduced to our assumption by introducing for each source that is also a sink an additional node with all the edges of the original node. The new enlarged graph will contain two `logical' nodes for each `physical' source/sink node that plays a dual role and hence it will be possible to have disjoint sets of sources and sinks on the boundary.
\end{remark}

\subsection{Channel tensor}\label{subsubsec:channeltens}

Define the \emph{channel tensor} $\bPhiT\in V\otimes K\otimes S^*$ by
\begin{equation} \label{eq:channeltens}
\PhiT{i}{s}{k} = \bar{H}_{si}\bar{F}_{ik}.
\end{equation}
The entry $\PhiT{i}{s}{k}$ gives the expected number of times a random walk emerging from the source $s$ and terminating at the sink $k$ visits the vertex $i$ (Lemma~\ref{lem:channelexpct}). In particular, for all for all $s\in S$ and $k\in K$, 
\begin{equation}\label{eq:phieq1}
\PhiT{s}{s}{k} = \PhiT{k}{s}{k} = F_{sk} = P_{sk} + [\mat{P}_{ST}\mat{G}\mat{P}_{TK}]_{sk}.
\end{equation}

Hence, the entries of $\bPhiT$ can be interpreted similarly to the entries of $\mat{\bar{H}}$: as expected numbers of visits to nodes in network by random walkers starting at a source node. While $\bar{H}_{si}$ gives the total number of visits to $i$ by a random walker starting at $s$, $\PhiT{i}{s}{k}$ measures only those walkers that ultimately reach the sink $k$. All other walkers, which either terminate due to dissipation before reaching $k$, reach other sinks or reach any of the sources, are not considered. Alternatively, $\PhiT{i}{s}{k}$ measures the amount of equilibrium flow through the node $i$ by a stream of particles entering through $s$ and leaving from $k$. The corresponding equilibrium flow through an edge $(i,j)$, denoted $\psi_{s,k}(i,j)$ is given by $\psi_{s,k}(i,j)=\PhiT{i}{s}{k}P_{ij}$. 

Suppose $s$ and $k$ are connected through a directed path (equivalently $F_{sk} > 0$) and let $T_{sk}$ denote the expected length of the path traversed by a walker starting at $s$ and terminating at $k$. Then, it can be shown (Lemma~\ref{lem:meanpathlen1}) that, 
\begin{equation}
T_{sk} = 1 + \sum_{i\in T}\frac{\PhiT{i}{s}{k}}{F_{sk}} = \frac{\mu}{F_{sk}}\frac{\partial F_{sk}}{\partial\mu}.
\end{equation}
Other moments and cumulants of the distribution of lengths of paths traversed by walkers starting at $s$ and terminating at $k$ can similarly be expressed in terms of the Green's function $\mat{G}$ or the derivatives of $F_{sk}$ with respect to $\mu$ (see Appendix~\ref{apndx:A}).
%

\subsection{Normalized channel tensor}\label{subsubsec:normchanneltens}

For brevity we will use a convention that when a set symbol replaces an ordinary index, it means to sum over that entity index of the set in question. For example, for any $i\in S\cup T$, $F_{iK}\equiv \sum_{k\in K} F_{ik}$ and for all $s\in S$, $i\in V$, $\PhiT{i}{s}{K}\equiv\sum_{k\in K}\PhiT{i}{s}{k}$.

For $s\in S$, $F_{sK}$ gives the probability (or expectation) of a random walk emerging from the source $s$ reaching any of the sinks in $K$. Assuming $F_{sK}>0$ for all $s\in S$, define the \emph{normalized channel tensor}, $\bnPhiT\in V\otimes K\otimes S^*$ by 
\begin{equation}\label{eq:phieq2}
\nPhiT{i}{s}{k} = \frac{\PhiT{i}{s}{k}}{F_{sK}}.
\end{equation}
The normalized channel tensor $\nPhiT{i}{s}{k}$ gives the expectation of the {\it normalized} number 
of visits to $i$ by a random walker from $s$ to $k$. Even though $\PhiT{i}{s}{k}$ in 
(\ref{eq:channeltens}) does not 
consider any of the random walk paths that return to sources or terminate due to dissipation at transient nodes,
 the number of visits to any specific node it records is reduced as the dissipation strength increases.
The normalization by $F_{sK}$ in (\ref{eq:phieq2}) takes out the global effect of damping and makes it possible to
 compare the channel tensors obtained at different dissipation strengths.

\subsection{Interpretations}

Generally, the entries of $\bPhiT$ and $\bnPhiT$ can be interpreted in the same way as the entries of $\mat{H}$ from the emitting mode. For practical applications, it is sometimes desirable to reduce the amount of available information to a single vector over $V$, which can be tabulated and graphically visualized using color maps. 

For a source $s\in S$, the \emph{source specific content} of a node $i\in V$ is $\nPhiT{i}{s}{K}$, the (normalized) total number of visits to $i$ by a random walker starting from $s$ and terminating at any of the sinks in $K$. Equations (\ref{eq:phieq1}-\ref{eq:phieq2}) imply that for all $s\in S$,
\begin{equation}\label{eq:nphitotmass}
\nPhiT{s}{s}{K} = \sum_{k\in K} \nPhiT{k}{s}{k} = 1, 
\end{equation}
that is, the entire flow starting at $s$ and reaching one of the sinks is normalized to unity. The \emph{total content} vector of $\bnPhiT$, denoted by $\hat{\btau}$, sums all (normalized) visits for each node:
\begin{equation}
\hat{\tau}_i = \nPhiT{i}{S}{K}.
\end{equation}
 The concept of \emph{destructive interference} measures the overlap between visits from different sources for each node. We define the interference vector $\hat{\bsigma}$ over $V$ by
\begin{equation}
\hat{\sigma}_i = \abs{S}\min_{s\in S} \nPhiT{i}{s}{K}.
\end{equation} Hence, $\hat{\sigma_i}$ gives the (normalized) total number of times the random walks from all sources co-occur at each node (scaled by the number of sources). The above formulas assume that each source emits the same amount of information. If needed, $\nPhiT{i}{s}{K}$ can be weighted by \emph{source strength} before evaluating total content or interference.

With damping factors less than unity, the random walks from sources to sinks effectively visit a small portion of the entire network. Information Transduction Module or ITM is a notion that we coined to describe the set of nodes most influenced by the flow. The nodes are ranked using their values for the total content or interference and the most significant nodes are selected. The number of selected nodes depends on the application-specific considerations but we found that the \emph{participation ratio} $\pi$ \citep{SY07} of the total content vector $\hat{\btau}$ gives a good
estimate of the number of nodes whose relative amount of content is significant. It is given by the formula
\begin{equation}
\pi(\hat{\btau}) = \frac{\left(\sum_{i\in V} \hat{\tau}_i \right)^2}{\sum_{j\in V} \hat{\tau}_j^2}.
\end{equation}

For undirected graphs, with a context consisting of a single source and a single sink, the values of $\bnPhiT$ are invariant under interchange of sources and sinks (see Appendix~\ref{apndx:reverse}). In general, however, reversing sources and sinks gives a different result, both due to asymmetry of the weight matrix in directed graphs and because sources and sinks have different roles if more than one of each are present: random walkers originating from different sources can simultaneously visit a transient node while a walk can terminate only at a single sink. Thus, the sinks split the total  information flow, that is, compete for it, while sources interfere, either constructively or destructively. 

\subsection{Path lengths}

Damping influences the normalized channel tensor differently from the non-normalized one or the absorbing and emitting solutions. For the non-normalized versions, damping factors control the amount of information reaching the boundary and any intermediate points. In the normalized case, all ``normalized'' information emitted from the sources reaches sinks (Equation \eqref{eq:nphitotmass}) and damping controls a random walker's average path  length, which is always bounded below by the length of the shortest path. Provided each source is connected to at least one sink through a directed path, we have (Corollary~\ref{cor:meanvar2})
\begin{equation}
T_{sK} = 1 + \sum_{i\in T}\nPhiT{i}{s}{K} = \frac{\mu}{F_{sK}}\frac{\partial F_{sK}}{\partial\mu}.
\end{equation}
Small values of $\mu$ strongly favor the nodes on the shortest paths, while large values allow random walks to take longer excursions and hence favor the vertices with many connections. As $\mu\downarrow 0$, only the nodes at the shortest path receive any flow and $T_{sK}\to \rho(s,K)$, the smallest distance between $s$ and any sinks in $K$. Appendix~\ref{apndx:A} contains a more detailed analysis of the role of damping with full statements and proofs.

Note that the $\mu$ dependence of $T_{sK}$ allows one to determine the appropriate damping factor for a specified average path length. From the results in Appendix~\ref{apndx:A}, it follows that $T_{sK}$ is a smooth function of $\mu$, which is strictly increasing on $[0,1]$ ($\frac{\partial T_{sK}}{\partial\mu}$ is positive). Therefore, the equation $T_{sK}(\mu)=x$ has a unique simple root for $\rho(s,K)\leq x\leq T_{sK}(1)$ and any root-finding method can be used to find $\mu$ from $T_{sK}$. When a context contains multiple sources, a desired weighted average of $T_{sK}$ over all $s\in S$ can be set to obtain a global uniform damping factor $\mu$.

\subsection{Potentials and normalized evolution operators}\label{subsubsec:operator}

In \citep{SY07}, we used a concept of a \emph{potential} to redirect the flow towards desired destinations in the emitting mode. To each node $j\in V$, we associated the value of the total potential $\Theta(j)$ such that
\begin{equation}
\Theta(j) = \sum_{k\in R}\theta_k(\rho(j,k)),
\end{equation}
where $R\subset T$ is the set of potential centers, $\rho(j,k)$ is the length of the shortest path from $j$ to $k$, and $\theta_k$ is an increasing function with a minimum at $k$. The exponential of the total potential was then used to re-weight the edges incoming to $j$ and form a new matrix $\hat{\mat{W}}$:
\begin{equation}\label{eq:potn2}
\hat{W}_{ij} = W_{ij}\exp(-\Theta(j)).
\end{equation}
The matrix $\hat{W}$ was then normalized to construct the transition matrix to be used (after applying damping) for the emitting mode. It is possible to express the application of the potential $\Theta$ as a direct transformation of the transition matrix $\mat{P}$ (without dissipation included). Let $f_j\equiv \exp(-\Theta(j))$ and let $\hat{\mat{P}}$ denote the new transition matrix derived from $\hat{\mat{W}}$. Then, $\hat{\mat{P}}$ can be written as
\begin{equation}
\hat{P}_{ij} = \frac{\hat{W}_{ij}}{\sum_{k\in V} \hat{W}_{ik}} = c_i\frac{P_{ij}f_j}{f_i},
\end{equation}
where
\begin{equation}
c_i = \frac{f_i\sum_{k\in V} W_{ik}}{\sum_{k\in V} W_{ik}f_k}.
\end{equation}
If $c_i=1$ for all $i$, we can express $\hat{\mat{P}}$ as a similarity transformation of $\mat{P}$, where $\hat{\mat{P}} = \mat{\Lambda}^{-1}\mat{P}\mat{\Lambda}$, where $\Lambda_{ij} = \delta_{ij}f_i$. In general, this is not the case with the heuristic potentials proposed in \citep{SY07}. However, we will now show (with proofs in Appendix~\ref{app:normop}) that there exist a potential derived from the matrix $\mat{F}$ that transforms the context specific matrix $\mat{\tilde{P}}$ into a stochastic transition matrix over source and transient nodes. The solution of the emitting mode using the new matrix recovers the normalized channel tensor $\bnPhiT$ and allows for  additional generalizations.

Let $V_K=\{i\in V: \bar{F}_{iK} > 0\}$ be the set of all nodes in $V$ that are connected with any sink in $K$ by a directed path and denote by $S_K$ and $T_K$ the sets $S\cap V_K$ and $T\cap V_K$, respectively. Suppose $0\leq \mu \leq 1$. For $i \in S_K\cup T_K$, let 
\begin{equation}\label{eq:genpotential}
f_i = \sum_{k\in K} \bar{F}_{ik} f_k, 
\end{equation} 
where $f_k > 0$ are arbitrary for $k\in K$.
For $i,j\in V_K$, define
\begin{equation}\label{eq:matrixN}
N_{ij} = \frac{\tilde{P}_{ij}f_j}{f_i}.
\end{equation}  
Since all transient nodes are assumed to be connected to a sink, the matrix $\mat{N}$ is well defined for $0 < \mu \leq 1$.  It can be shown using parts of Appendix~\ref{subsec:damping0} that it is also well defined in the limit as $\mu\downarrow 0$. Clearly, $N_{kj} = 0$ for all $k\in K$ and $j\in V_K$. Over $S_K\cup T_K$, the matrix $\mat{N}$ is stochastic (Proposition~\ref{prop:normop2a}), that is $\sum_{j\in V_K} N_{ij} = 1$. Hence, $\mat{N}$ is an evolution operator for flow entering at sources and terminating exclusively at a point in $K$. The dependence on $\mu$ is built in the transition probabilities $N_{ij}$. Furthermore, Equation~\eqref{eq:genpotential} is the only way to construct a function over $V_K$ so that \eqref{eq:matrixN} gives a stochastic transition matrix (Proposition~\ref{prop:normop2a}).

\begin{figure}[htbp]
\begin{center}
\includegraphics{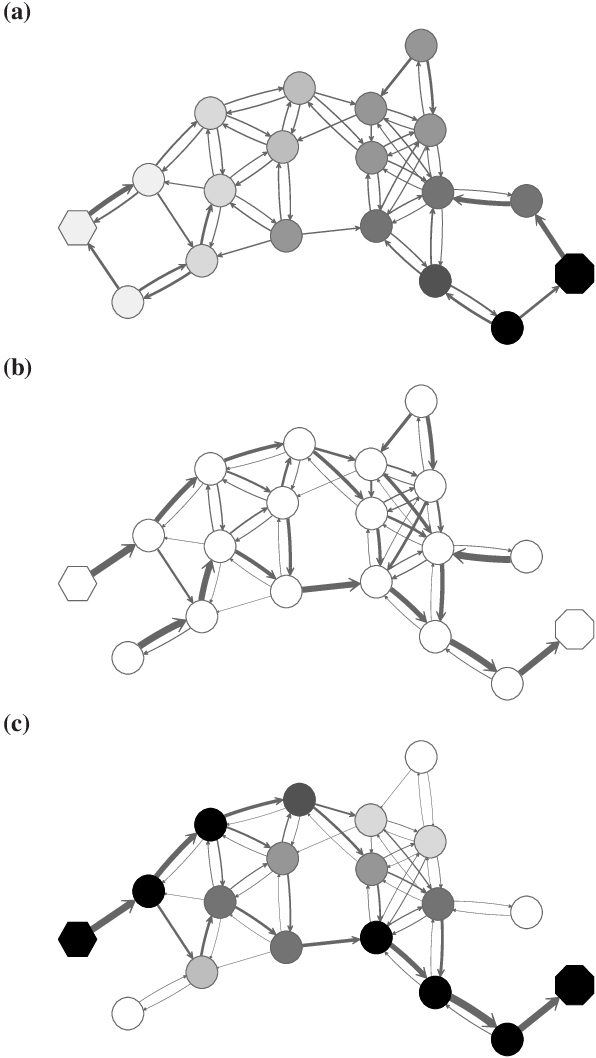}
\caption[Transformation of the evolution operator using potentials]{Transformation of the evolution operator using potentials. Part \textbf{(a)} shows the directed graph from Figure~\ref{fig:example1} with transition probabilities indicated by edge arrows. Nodes are shaded according to the potential associated with the sink (octagon). Part \textbf{(b)} displays the normalized transition operator $\mat{N}$ resulting from the application of the sink potential to the context specific transition matrix (the single source is indicated as hexagon). Part \textbf{(c)} shows the values of the normalized channel tensor as shades and the directional flow through each edge as arrows. Comparison between \textbf{(b)} and \textbf{(c)} shows that edges with large transition probabilities need not carry significant flows.}\label{fig:example2}
\end{center}
\end{figure}

Denote by $\mat{G}(\mat{N})$, $\mat{\bar{F}}(\mat{N})$, $\mat{\bar{H}}(\mat{N})$, $\bPhiT(\mat{N})$ the quantities corresponding to $\mat{G}$, $\mat{F}$, $\mat{H}$ and $\bPhiT$ respectively, when the matrix $\tilde{\mat{P}}$ is replaced by the transition matrix $\mat{N}$. Since transformation \eqref{eq:matrixN} is a similarity transformation from $\tilde{\mat{P}}$ to $\mat{N}$, it is easy to establish
\begin{prop}\label{prop:normop}
The following identities hold:
\begin{enumerate}[(i)]
\item For all $i,j\in T_K$,\quad $\displaystyle [\mat{G}(\mat{N})]_{ij} = G_{ij}f_j/f_i$,
\item For all $i\in V_K$ and $k\in K$,\quad $\displaystyle [\mat{\bar{F}}(\mat{N})]_{ik} = \bar{F}_{ik}f_k/f_i$,
\item For all $s\in S_K$ and $i\in V_K$,\quad $\displaystyle [\mat{\bar{H}}(\mat{N})]_{si} = \bar{H}_{si}f_i/f_s$,
\item For all $s\in S_K$, $i\in V_K$ and $k\in K$,\quad $\displaystyle [\bPhiT(\mat{N})]^{s}_{i,k} = \PhiT{i}{s}{k}f_k/f_s$.
\end{enumerate}
\end{prop}

The special case where $f_k$'s are equal for all $k\in K$ results in $[\mat{\bar{H}}(\mat{N})]_{si}=\nPhiT{i}{s}{K}$ and $[\bPhiT(\mat{N})]^{s}_{i,k}=\nPhiT{i}{s}{k}$. Hence, $\mat{N}$ in this case can be considered a `natural' transition operator for random walks or Markov chains that start at sources $S$ and terminate at a point in $K$. The time evolution of such processes can be followed by raising $\mat{N}$ to appropriate powers. As demonstrated in the previous sections, the parameter $\mu$, which is implicit in $\mat{N}$, controls the how fast the random walkers move towards their destinations. Figure~\ref{fig:example2} shows a graphical example of the transformation of the operator $\tilde{\mat{P}}$ into $\mat{N}$, which directs the flow towards the sink.

In general, each value $f_k$ represents the \emph{sink strength} of the sink $k\in K$. Equal sink strengths imply no prior preference for any sink while in the case of unequal sink strengths the flow from sources towards sinks is preferentially pulled towards sinks with larger strength. It is also possible to exclude some sinks from consideration by setting their strength to $0$. Since the scaling of $f_k$'s does not affect the transition matrix, they can be considered as probabilities over $K$ and, in the Bayesian framework, as priors. Indeed, the equation
\begin{equation}\label{eq:bayes1}
[\mat{\bar{F}}(\mat{N})]_{ik} = \frac{\bar{F}_{ik}f_k}{\sum_{k'\in K}\bar{F}_{ik'}f_{k'} }
\end{equation}
can be easily recognized as Bayes' formula for posterior likelihood. Here $\bar{F}_{ik}$ can be interpreted as the likelihood of a random walk from $i$ being absorbed at sink $k$, given that $k$ is absorbing; $f_k$ is the prior probability that $k$ is absorbing; while $[\mat{\bar{F}}(\mat{N})]_{ik}$ is the likelihood that a walker starting at $i$ is absorbed at $k$, given that it is absorbed at any of the `active' sinks (i.e. sinks with $f_k > 0$). This suggests a use of the absorbing and channel modes as Bayesian inference frameworks for forming and testing hypotheses. For example, sinks can be associated with mutually exclusive hypotheses. The likelihood of each source being associated with a hypothesis can then be evaluated using \eqref{eq:bayes1}.

The matrix $\mat{N}$ can also be expressed in terms of potentials. Suppose $f_k > 0$ for each $k\in K$ and set the potential of each node $i\in V_K$ by
\begin{equation}
\Theta(i)\equiv -\log \sum_{k\in K} F_{ik} f_k.
\end{equation}
Then, $\mat{N}$ can be written as 
\begin{equation}
N_{ij} = \tilde{P}_{ij} \exp\big(\Theta(i) - \Theta(j) \big),
\end{equation}
with the straightforward interpretation of the information flow moving from high- to low- potential nodes.
Unlike our earlier potential \eqref{eq:potn2}, which was totally heuristic, this new potential is theoretically founded.

\section{Applications to cellular networks}\label{sec:application}

In the recent years, development of high-throughput genomic and proteomic techniques resulted in proteome-wide interaction networks (interactomes) in a number of model organisms \citep{ICOYHS01,UGCM00,GBBC03,LABG04,SWLH05,RVHH05,PDMZ05}.  Databases such as the BioGRID~\citep{BSRB08}, IntAct~\citep{KAAB07}, DIP~\citep{ SMSPBE04} and MINT~\citep{CCPN07} have been established to collect and curate sets of interactions from different experiments and make them publicly available. Most databases contain physical binding interactions, while the BioGRID additionally includes genetic interactions (such as synthetic lethality) and biochemical interactions, which describe a biochemical effect of one protein upon another. 

\begin{figure}[tbp]
\begin{center}
\includegraphics[scale=0.75]{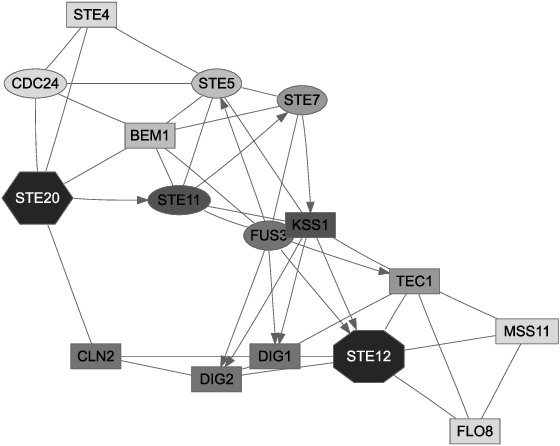}
\caption[ITMs for the MAPK cascade part of the yeast pheromone response]{ITMs for the MAPK cascade part of the yeast pheromone response obtained by running the normalized channel mode with Ste20p as the source and Ste12p as the sink ($\mu=0.85$). In addition to the `standard' excluded nodes (histones, chaperones, cytoskeleton), we also removed the nodes for Slt2p and Nup53p as discussed in the main text. Grey shading of each node indicates its total content (darker nodes represent more visits). The number of nodes shown is determined by the participation ratio.}\label{fig:fig1}
\end{center}
\end{figure}

A protein (or a protein state) is mapped to a node in a cellular protein network. Hence, the solution of a channel mode context (as tensors $\bPhiT$ and $\bnPhiT$) will highlight an ITM consisting of the proteins most visited by a directed flow from sources to sinks, that is, the proteins lying on the most likely paths connecting sources and sinks. Clearly, biological interpretations of the model results will depend on the nature of interactions ascribed  for links within the network graphs: the interpretation for an ITM from a genetic or functional network 
 and that for an ITM from a physical network should be different. Here, we will mainly focus on the physical networks where interactions correspond to binding between two proteins (undirected) or a post-translational modification of one protein by another (directed). Each step of a random walk in such a network is equivalent to a physical event and dissipation naturally corresponds to protein degradation by a protease and negative feedback mechanisms that limit transmission of information. It is thus plausible that the information channels obtained by solving the channel mode with suitable sources and sinks may correspond to (portions of) actual signaling or gene regulation pathways. However, it is important to note that the biological validity of a network path is contingent upon the transitivity of biochemical effect along that path as not all protein-protein interactions have the same downstream effect. Also, even in the best case, the information obtained from a random walk models would be primarily qualitative since cellular processes in general do not correspond to linear models.

The simplest way to use the channel mode is for knowledge retrieval by exploring large networks. In many model organisms, it is possible to construct physical protein interaction networks that integrate proteome-wide data collected from results of multiple experiments from different sources using a variety of techniques. All three modes discussed in this paper, emitting, absorbing and channel, can be used to explore network neighborhoods of proteins of interest and learn more about their function(s). In particular, given two proteins, one set as a source and the other as a sink, one may use the channel mode to extract a sub-network containing only the proteins most relevant to the possible functional relation between them. By using graphical tools to visualize the sub-network and by examining the annotations for the individual proteins within it, one can learn about their role within the cell and hence understand the cellular context of the query proteins.

More complex settings of the channel mode can be used for hypothesis forming and confirmation. For example, using destructive interference between flows from multiple sources may reveal the points of crosstalk between different biological pathways that can be selected for further experimental investigation. Given one or more proteins of interest one can explore the hypothesis about their function by using the property that sinks split the flow. Set these proteins of interest as sources and set several sinks, each associated with an a different biological role. After running a channel mode, the sinks attracting most of the flow would point to the most likely cellular role of the proteins, \emph{given all alternatives}. Of course, if all alternatives are biologically invalid, no valid functional inference can be made.

\begin{figure}[tbp]
\begin{center}
\includegraphics{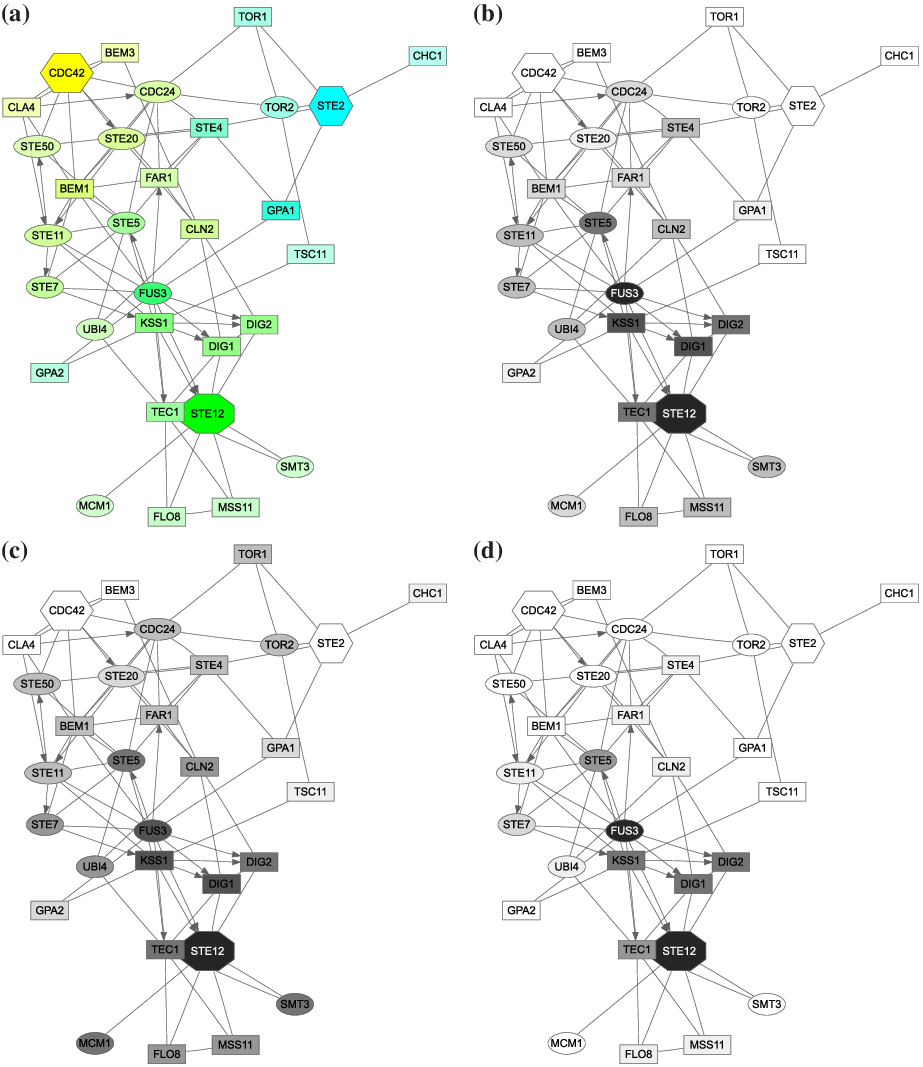}
\caption[Yeast pheromone response ITMs]{Yeast pheromone response ITMs obtained by running the normalized channel mode with Ste2p and Cdc42p as the sources and Ste12p as the sink with damping factors $\mu=0.85$ (\textbf{(a)} and \textbf{(b)}), $\mu=1$ \textbf{(c)} and $\mu=0.55$ \textbf{(d)}. Part \textbf{(a)} shows flow intensity from each source using a separate base color, while \textbf{(b)}, \textbf{(c)} and \textbf{(d)} show interference (darker nodes indicate stronger interference). All images show the top 30 nodes in terms of the total content for the case of $\mu=0.85$.}\label{fig:fig2}
\end{center}
\end{figure}

Since it is possible to arbitrarily specify sources and sinks and obtain model results that may not correspond to any cellular role, it is desirable to be able to check whether retrieved ITMs can be associated with any existing annotation. A common way to do so is through enrichment analysis~\citep{HSL09}, which assigns terms from a controlled vocabulary such as Gene Ontology~\citep{ABB00} or KEGG~\citep{KGF10} to a set of genes or proteins with weights. Each term from a controlled vocabulary annotates one or more proteins and enrichment analysis aims to retrieve, by statistical inference, those terms that best describe the set of submitted proteins with weights. While many enrichment tools were developed for analysis of microarrays~\citep{HSL09}, we found that none of them are entirely suitable for analyzing the results of our model. We have therefore developed a novel tool, called \emph{SaddleSum}~\citep{SY10a}, which is based on asymptotic approximation of tail probabilities~\citep{LR80}. For each term, it computes the probability that a score greater than or equal to the sum of weights, for all the proteins associated with that term, can arise by chance. In the context of the channel mode, the quantities that can serve as input to \emph{SaddleSum} are source specific content, total content, and destructive interference.


\subsection{Example: Yeast Pheromone Pathway}\label{subsec:pheromone}

\begin{figure}[tbp]
\begin{center}
\includegraphics{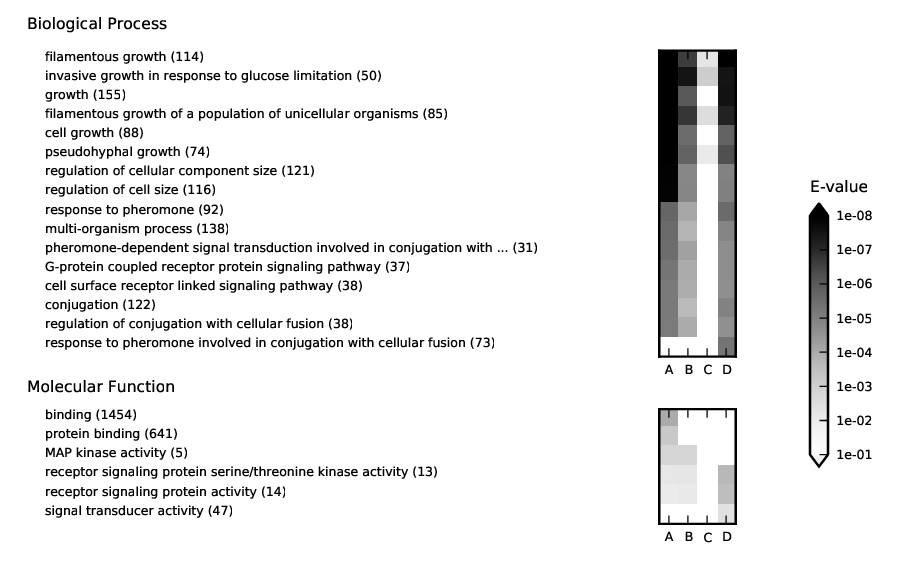}
\caption[GO term enrichment analysis]{Gene Ontology term enrichment analysis of examples from Fig.~\ref{fig:fig1} and \ref{fig:fig2} using SaddleSum.  The most significant GO terms from the Biological Process and Molecular Function categories are shown on the left (number of annotated proteins is in brackets), with their E-values indicated by shading of the squares on the right. Each column corresponds to a single example: A -- Fig.~\ref{fig:fig2}c ($\mu=1$); B -- Fig.~\ref{fig:fig2}b ($\mu=0.85$); C -- Fig.~\ref{fig:fig2}d ($\mu=0.55$); D -- Fig.~\ref{fig:fig1} ($\mu=0.85$). The input weights for columns A, B and C were obtained from the interference values at all non-excluded nodes except sources and sinks, while total content was used for column D. E-values larger than the cutoff of 0.01 are shown as white squares.}\label{fig:fig7}
\end{center}
\end{figure}

As an illustration, we will consider the mating pheromone response pathway in \textit{Saccharomyces cerevisiae}, the one of the best understood signalling pathways in eukaryotes \citep{Bardwell2005}. The mating signal is transferred from a membrane receptor to a transcription factor in nucleus, leading to transcription of mating genes. We will only provide a very brief overview of the pathway necessary for discussing our examples; more details are available in the review by \citet{Bardwell2005}.

A mating pheromone binds the transmembrane G-protein coupled pheromone receptors Ste2p/Ste3p. This leads to dissociation of Ste4p and Ste18p, the membrane bound subunits of the G-protein complex, which also contains the subunit Gpa1p. Ste4p then binds to the protein kinase Ste20p, which is recruited to the membrane through Cdc42p, and the scaffold protein Ste5p. On the scaffold, a MAPK (mitogen activated protein kinase) cascade occurs, where each protein kinase in the cascade is activated by being phosphorylated by the previous kinase and in turn activates the next protein. In this case, the cascade goes Ste20p $\to$ Ste11p $\to$ Ste7p $\to$ Fus3p or Kss1p. The final activated kinase Fus3p or Kss1p then migrates to the nucleus where it phosphorylates the proteins Dig1p and Dig2p, the repressors of the Ste12p transcription factor activity. The Ste12p transcription factor can then, in coordination with other transcription factors such as Tec1p, promote the transcription of the mating genes.

As a basis for the underlying network, we used all physical yeast protein-protein interactions from the BioGRID-3.0.65~\citep{BSRB08}. To improve the fidelity of the network, we removed every interaction reported by a single publication unless that publication described a low-throughput experiment, which we assumed to be more targeted. We considered an experiment low-throughput if it reported fewer than 300 interactions in total. We also removed all interactions labelled with the `Affinity Capture-RNA' experimental system since they were not protein-to-protein. The physical binding interactions were given a weight 1 in both directions while the interactions labelled as `Biochemical Activity' were interpreted as directional (bait $\to$ prey) and given a weight of 2. In cases where both physical and biochemical interactions were reported, only biochemical were considered. Since it is known \citep{SPAD02} that proteins with a large number of non-specific interaction partners might overtake the true signaling proteins in the information flow modeling, we excluded a set of 165 nodes corresponding to cytoskeleton proteins, histones and chaperones. We found that the excluded nodes do not strongly affect the results of the particular examples presented here. For each example we computed the normalized channel tensor summed over all sinks, that is $\nPhiT{i}{s}{K}$ in our notation.

Fig.~\ref{fig:fig1} focuses solely on the MAPK cascade portion of the pheromone pathway, with Ste20p as a single source and Ste12p as a single sink. Selection of top proteins by participation ratio captures all important participants of the cascade but emphasizes a `shortcut' through Slt2p, which is a MAP kinase involved in a different signalling pathway. Upon examination of the reference \citep{ZMM96} used by the BioGRID to support the Ste20p $\to$ Slt2p link, we discovered that it does not anywhere claim existence of such interaction. Hence, we removed Slt2p from our network for all subsequent queries and reran the query. In addition to the true pathway, the second ITM emphasized a path through Nup53p (a nuclear core protein). We examined the publication \citep{LWMD07} indicated by the BioGRID to support the Ste20p $\to$ Nup53p link and found that while it is true that Ste20p phosphorylates Nup53p \emph{in vitro}, another kinase was mainly responsible for its phosphorylation \emph{in vivo}. We therefore felt justified to exclude Nup53p as well. The final ITM resulting from the same query with Slt2p and Nup53p 
excluded in addition to the $165$ proteins mentioned before is shown in Fig.~\ref{fig:fig1}. Enrichment analysis using the GO database (Fig.~\ref{fig:fig7}, column D) gives `receptor signaling protein serine/threonine kinase activity' as a top term under `Molecular Function' and `filamentous growth' as a top term under `Biological Process'. Hence, the final ITM agrees well with the canonical understanding of the MAPK cascade.

To obtain an ITM best describing the entire pheromone response pathway, it is necessary to include two sources, the receptor Ste2p and the membrane-bound protein Cdc42p (Fig.~\ref{fig:fig2}). Including only Ste2p is not sufficient because of the shortcut through the link Gpa1p $\to$ Fus3p, which avoids the MAPK cascade. Furthermore, inclusion of Cdc42p is biologically sensible because Cdc42p activates Ste20p \citep{Bardwell2005} and is hence necessary for the MAPK cascade. Since the information flows from Ste2p and Cdc42p to Ste12p share some but definitely not all paths in common (Fig.~\ref{fig:fig2}a), interference between them (Fig.~\ref{fig:fig2}b), rather than total visits, is most appropriate to highlight the most important proteins in the signalling pathway. 

Figs.~\ref{fig:fig2} (b,c and d) illustrate the effect of changing the damping factor $\mu$. With $\mu=1$ (Fig.~\ref{fig:fig2}c) the flows from sources visit a much larger portion of the network (the average path length $\bar{T}_{sK}= \frac{1}{\abs{S}}\sum_{s\in S} T_{sK} = 19.32$) than with $\mu=0.85$  (Fig.~\ref{fig:fig2}b, $\bar{T}_{sK}=7.14$) or $\mu=0.55$  (Fig.~\ref{fig:fig2}d, $\bar{T}_{sK}=4.58$). The lower bound on path length is $3$, the shortest distance from both sources to Ste12p.  In terms of enrichment analysis with GO (Fig.~\ref{fig:fig7}, columns A,B and C), all three cases pick as significant the terms related to cell growth but with different statistical significance. In addition, both the $\mu=0.85$ and $\mu=1$ cases can be associated with terms related to MAP kinase and signal transduction. Hence, results for large $\mu$ tend to give lower GO term E-values but with lower specificity while results for small $\mu$ give very specific results but with less significant E-values. The $\mu$-dependence of E-values for any given term is not surprising since different $\mu$s correspond to different null models. Based on the images in Fig.~\ref{fig:fig2}, the enrichment results as well as our experience in other model contexts, the values of $\mu$ around 0.85, corresponding to a random walk visiting about four more nodes than the minimum necessary to reach the sink, appear to give the best results in emphasizing biologically relevant channels.

\begin{figure}[tbp]
\begin{center}
\includegraphics{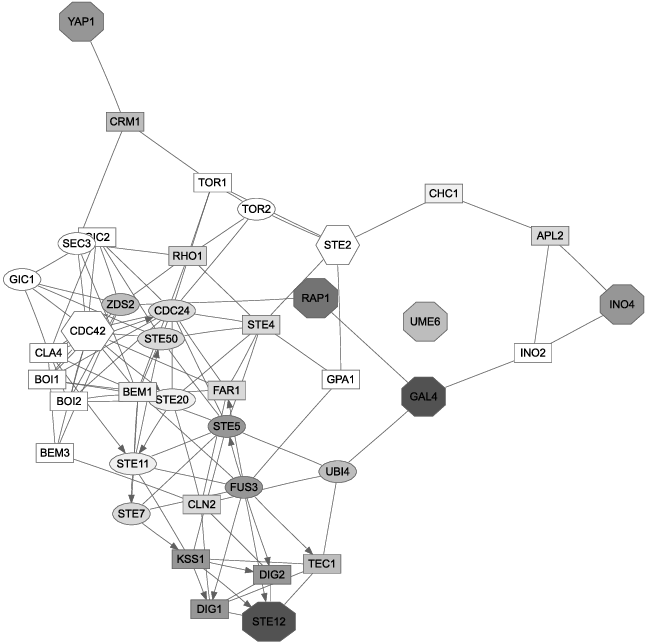}
\caption[Alternative transcription factor targets of yeast pheromone response pathway]{Alternative transcription factor targets of yeast pheromone response pathway. ITM was obtained by running the normalized channel mode with Ste2p and Cdc42p as the sources and the transcription factors Ste12p, Gal4p, Ino4p, Ume6p, Yap1p and Rap1p as the sinks with damping factor $\mu=0.85$. Nodes are shaded by interference. Most of the flow still reaches the proper target Ste12p while the channels towards other sinks are weak.}\label{fig:fig3}
\end{center}
\end{figure}

The channel mode is relatively robust to addition of non-relevant sinks to its contexts. In Fig.~\ref{fig:fig3}, we set as sinks Ste12p plus five additional transcription factor proteins not known to be directly influenced by the pheromone response pathway. As can be seen, the most visited nodes mostly belong to the channel to Ste12p while the remaining sinks are linked to sources by weaker channels (mostly not shown because the figure shows only the top 40 nodes). In this case, Ste12p has $0.62$ total visits (out of $2$) with interference of $0.54$. The remaining $1.38$ visits are distributed among the other five sinks. Enrichment results are similar to those with additional sinks absent.

\begin{figure}[tbp]
\begin{center}
\includegraphics{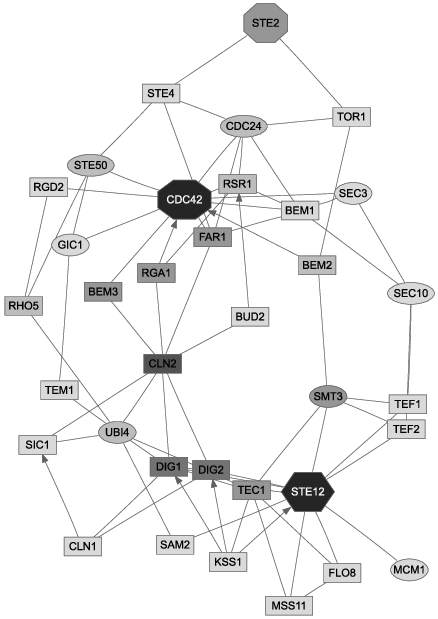}
\caption[Reversal of sources and sinks.]{Reversal of sources and sinks for the yeast pheromone response pathway. ITM was obtained by running the normalized channel mode with Ste2p and Cdc42p as the sinks and Ste12p as the source ($\mu=0.85$). Nodes are shaded by total content. The flow uses entirely different channels from Fig~\ref{fig:fig2} and the MAPK cascade is missing.}\label{fig:fig4}
\end{center}
\end{figure}

Fig.~\ref{fig:fig4} shows the effects of reversing sources and sinks. The resulting ITM performs much worse in describing the pheromone pathway for both reasons discussed in the last paragraph of \ref{subsubsec:channeltens}. Firstly, the pheromone response pathway is dominated by the MAPK phosphorylation cascade, which is in our case modelled by directed links `towards' Ste12p. Thus, the cascade cannot be seen at all in the image. Secondly, since the sinks are competing, most of the information emitted from Ste12p is captured by Cdc42p, leaving little for Ste2p.

\section{Discussion and Conclusion}\label{sec:discussion}

We have described the channel mode for modeling context-specific information flow in interaction networks. It supports discovery of the most likely channels through networks between user-specified origins (sources) and destinations (sinks) of information. The transition operator $\mat{N}$, constructed by applying potentials centered on sinks to the original transition operator, fully describes the dynamics of the flow within the channels. The mathematical formulation of the channel mode is flexible and can be easily modified for related cases. For example, it is possible to model the flow through a sequence of `checkpoints' by using destination from one context as the origin for another. 

Unlike other models based on random walks and/or electrical networks proposed in the literature \citep{TWACS06,SBKEI08,MLZR09,VTX09} that can be reduced to either emitting or absorbing modes, our channel mode allows for ``directed" information flow. Furthermore, it can readily accommodate networks containing directed links and multiple sources and sinks. Most importantly, like our original framework (absorbing and emitting modes), the channel mode uses damping to retain the information flow in the portion of the network most relevant to the specified context and prevent visits to distant nodes. Damping is controlled by  a free parameter $\mu$ (or more generally, node specific parameters $\alpha_i$), which in the case of the channel mode controls the amount of path deviation from the shortest one. In statistical physics terms, this is equivalent to using fugacity to control the number of particles of the system. Computation of the model solution requires only a solution to a (sparse) system of linear equations, without needing to simulate random walks as was done in \citep{TWACS06}. If computation of multiple contexts with the same damping coefficients is required, it is possible, using well known results from linear algebra (Appendix~\ref{app:invcompute}), to re-use the Green's function for one context to efficiently compute the Green's function for another.

Applied to physical protein interaction networks, the channel mode can be used as a framework for knowledge retrieval through network exploration and hypothesis formation and confirmation. The node weights obtained can be interpreted directly as well as submitted to an enrichment tool for further analysis. Note however that, in many cases, the annotation of a protein by a term is directly tied to publications reporting its physical interactions. 

As illustrated by our pheromone pathway example, the results of our model are sensitive to `shortcuts' between biologically unrelated protein nodes. Therefore, to obtain valid conclusions from the ITMs retrieved, the underlying interaction network must be constructed from high-quality data relevant to the biological context under investigation. The nodes with many non-specific interactions, as well as links that may not represent actual \textit{in vivo} interactions under the query context, should be removed from the network. The damping factor $\mu$ also needs to be selected appropriately for the biological context investigated, depending on whether the coverage (high $\mu$) or the selectivity (low $\mu$) of the channel are desired more. The results of enrichment analysis for the example shown in Fig.~\ref{fig:fig2} indicate that at least some interpretations are robust to the change of $\mu$.

We have already deployed a software implementation of our framework, called \ITMProbe, to the web for the use of biomedical researchers~\citep{SY09}. In future, we plan to extend our information flow framework to a platform for network-based context-specific qualitative analysis of cellular process. The improved models will take into account additional biological information, such as protein complex memberships, post-translational modification states and abundances, possibly leading to non-linear transition operators. Generally, while wishing to improve accuracy by incorporating more specific information, we aim to preserve the simplicity of the present framework.

\section*{Acknowledgments}
This work was supported by the Intramural Research Program of the National Library of Medicine at the National Institutes of Health.

\newpage

\appendix
\section*{Appendix}
\section{Channel tensor as expectation}

\begin{lemma}\label{lem:channelexpct}
Let $Z^s_{i,k}$ be a random variable denoting the total number of times a random walk starting at $s\in S$ and absorbed at $k\in K$ visits $i\in V$. Then,
\begin{equation}
\E(Z^s_{i,k}) = \PhiT{i}{s}{k}.
\end{equation}
\end{lemma}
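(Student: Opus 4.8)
The plan is to realize $Z^s_{i,k}$ as a sum of indicator variables over time and to collapse that sum using the Markov property. I would work with the absorbing Markov chain governed by the context-specific matrix $\mat{\tilde{P}}$: a walk $(X_t)_{t\geq 0}$ starts at $X_0=s$, moves among the states of $S\cup T$, and terminates at a time $\tau$ either by reaching a sink in $K$ or by dissipating into an implicit out-of-network state. Because the $S$-columns of $\mat{\tilde{P}}$ vanish, the walk never re-enters a source, and because the $K$-rows vanish, reaching a sink is terminal; thus $A_k:=\{X_\tau = k\}$ is exactly the event of absorption at $k$. Writing $N_i=\sum_{t=0}^{\tau}\mathbf{1}\{X_t=i\}$ for the number of visits to $i$, we have $Z^s_{i,k}=N_i\,\mathbf{1}\{A_k\}$, so by Tonelli (every term is nonnegative)
\begin{equation*}
\E(Z^s_{i,k}) = \sum_{t=0}^{\infty}\Pr\!\big(X_t=i,\ t\leq\tau,\ A_k\big).
\end{equation*}

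The key step is to factor each summand. For $i\in S\cup T$ the event $\{X_t=i\}$ forces $t<\tau$ (the walk is still running), and conditional on $\{X_t=i,\ t<\tau\}$ the future event $A_k$ depends only on the current state $i$ by the Markov property at time $t$; its conditional probability is precisely the absorption probability $\bar{F}_{ik}$ of \eqref{eq:barF}. Hence
\begin{equation*}
\Pr\!\big(X_t=i,\ t<\tau,\ A_k\big) = \bar{F}_{ik}\,\Pr\!\big(X_t=i,\ t<\tau\big),
\end{equation*}
and summing over $t$ gives $\E(Z^s_{i,k}) = \bar{F}_{ik}\,\E(N_i) = \bar{F}_{ik}\,\bar{H}_{si} = \PhiT{i}{s}{k}$, since $\bar{H}_{si}$ is by construction \eqref{eq:barH} the expected number of visits to $i$ by a walk emitted from $s$.

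It remains to verify the boundary states, which I expect to be the only delicate point. When $i=k$ the single visit occurs exactly at the absorbing step, so $N_k=\mathbf{1}\{A_k\}$ and $\E(Z^s_{k,k})=\Pr(A_k)=F_{sk}=\bar{H}_{sk}\bar{F}_{kk}$, consistent with $\bar{F}_{kk}=1$; for a different sink $k'\neq k$ both $Z^s_{k',k}$ and $\bar{F}_{k'k}=\delta_{k'k}$ vanish, and for a second source $s'\neq s$ both $Z^s_{s',k}$ and $\bar{H}_{ss'}=\delta_{ss'}$ vanish, so the identity persists. The main obstacle is therefore not any single computation but the careful bookkeeping: justifying the Markov factorization uniformly (including the degenerate conditioning when the walk already sits at a sink) and defining the visit count $N_i$ so that the terminal visit to $k$ is counted exactly once and matches the $\bar{H}$/$\bar{F}$ conventions of \eqref{eq:barF}--\eqref{eq:barH}.
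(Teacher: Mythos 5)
Your proof is correct, but it takes a genuinely different route from the paper's. The paper argues by direct path enumeration: it writes $\E(Z^s_{i,k})$ as a double sum over path lengths $\tau$ and visit times $t$ of the quantity $Y_i(t;\tau)=\sum_{x}X_i(x,t)\Pr(x)$, evaluates $Y_i(t;\tau)$ separately in the three cases $i\in S$, $i\in K$, $i\in T$ in terms of powers of $\mat{P}_{TT}$, and then resums the geometric series into the Green's function $\mat{G}$ to recognize the products $\bar{H}_{si}\bar{F}_{ik}$. Your argument replaces that computation with the Markov property: the factorization $\Pr(X_t=i,\ t<\tau,\ A_k)=\bar{F}_{ik}\,\Pr(X_t=i,\ t<\tau)$ is exactly the probabilistic counterpart of the paper's algebraic splitting of each path, at its visit to $i$ at time $t$, into a prefix (whose total weight builds $\bar{H}_{si}$) and a suffix (whose total weight builds $\bar{F}_{ik}$), and Tonelli then collapses the sum over visit times to $\E(N_i)\,\bar{F}_{ik}$. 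What your route buys is brevity and a nearly uniform treatment of the boundary cases, which you dispatch correctly ($N_k=\mathbf{1}\{A_k\}$, and the vanishing of both sides for $k'\neq k$ and $s'\neq s$). What it costs is self-containedness: you invoke as known the identifications $\E(N_i)=\bar{H}_{si}$ and $\Pr(A_k\mid X_t=i)=\bar{F}_{ik}$ for the context-specific chain $\mat{\tilde{P}}$, whereas the paper's three-case computation re-derives precisely these facts from the block formulas \eqref{eq:barF}--\eqref{eq:barH}. Since the paper does assert those interpretations (following Kemeny--Snell) before stating the lemma, your reliance on them is legitimate; a fully rigorous write-up should just add the short verification that, for the modified chain $\mat{\tilde{P}}$ with vanishing $S$-columns and $K$-rows, the matrices in \eqref{eq:barF}--\eqref{eq:barH} are indeed the absorption probabilities and expected visit counts you use --- which is exactly the matrix algebra your proof elides.
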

\begin{proof}

Consider a path $x=x_0,x_1,x_2\ldots x_\tau$ from $s\in S$ to $k\in K$ of total length $\tau$ where $x_0=s$, $x_\tau=k$ and $x_1,x_2,\ldots x_{\tau-1}\in T$. The total weight or probability associated with $x$ is $\Pr(x)=P_{x_0 x_1}P_{x_1 x_2}\ldots P_{x_{\tau-1} x_{\tau}}$. For any $i\in V$, let $X_i(x,t)=1$ if $x_t=i$ and $0$ otherwise. Then, the total number of times $x$ visits $i$ is $N_i(x)=\sum_{t=0}^{\tau} X_i(x,t)$ and 
\begin{equation*}
Z^s_{i,k}=\sum_{\tau=1}^\infty \sum_{x\in\mathcal{X}(\tau)} N_i(x),
\end{equation*}
where $\mathcal{X}(\tau)$ denotes the set of all paths from $s$ to $k$ of length $\tau$. Therefore,
\begin{align}
\E(Z^s_{i,k}) = \sum_{\tau=1}^\infty \sum_{x\in\mathcal{X}(\tau)} N_i(x)\Pr(x) &= \sum_{\tau=1}^\infty \sum_{x\in\mathcal{X}(\tau)} \sum_{t=0}^{\tau} X_i(x,t)\Pr(x) = \sum_{\tau=1}^\infty \sum_{t=0}^{\tau} Y_i(t;\tau),
\end{align}
where $Y_i(t;\tau)= \sum_{x\in\mathcal{X}(\tau)} X_i(x,t)\Pr(x)$. There are three cases to consider depending on whether $i$ is a source, a sink or a transient node.

If $i$ is a source, a path from $s$ can visit $i$ only if $i=s$ and $t=0$. Therefore, $X_i(x,t)=\delta_{si}\delta_{t0}$ and hence 
\begin{equation}
Y_i(t;\tau) = 
\begin{cases} 
\delta_{si}P_{sk}&\text{if $t=0$ and $\tau=1$},\\
\sum_{j,j'\in T}\delta_{si}P_{ij}\left[\mat{P}_{TT}^{\tau-2}\right]_{jj'}P_{j'k} & \text{if $t=0$ and $\tau\geq 2$}, \\
0 & \text{otherwise}.
\end{cases}
\end{equation}
Here $\left[\mat{P}_{TT}^{\tau-2}\right]_{jj'}$ is exactly the total weight of paths of length $\tau-2$ that start at $j\in T$, visit nodes in $T$ and terminate at $j'\in T$. Hence, 
\begin{align}
\E(Z^s_{i,k}) &= \delta_{si}P_{ik} + \sum_{\tau=2}^\infty \sum_{j,j'\in T}\delta_{si}P_{ij}\left[\mat{P}_{TT}^{\tau-2}\right]_{jj'}P_{j'k} =  \delta_{si}\left[\mat{P}_{SK}\right]_{ik} +  \delta_{si}\sum_{j,j'\in T} P_{ij}\sum_{n=0}^\infty\left[\mat{P}_{TT}^n\right]_{jj'}P_{j'k}  \notag\\
&= \delta_{si}\left[\mat{P}_{SK} + \mat{P}_{ST}\mat{G}\mat{P}_{TK}\right]_{ik} = \bar{H}_{si}\bar{F}_{ik} = \PhiT{i}{s}{k}.
\end{align}

Similarly, if $i$ is a sink, a walker from $s$ can visit $i$ and terminate at $k$ only if $i=k$ and $0<t=\tau$. Thus, $X_i(x,t)=\delta_{ik}\delta_{t\tau}$ and
\begin{equation}
Y_i(t;\tau) =
\begin{cases} 
P_{si}\delta_{ik}&\text{if $t=\tau=1$},\\
\sum_{j,j'\in T}P_{sj}\left[\mat{P}_{TT}^{\tau-2}\right]_{jj'}P_{j'i}\delta_{ik} & \text{if $t=\tau\geq 2$}, \\
0 & \text{otherwise}.
\end{cases}
\end{equation}
Therefore,
\begin{align}
\E(Z^s_{i,k}) &= P_{si}\delta_{ik} + \sum_{\tau=2}^\infty \sum_{j,j'\in T}P_{sj}\left[\mat{P}_{TT}^{\tau-2}\right]_{jj'}P_{j'i}\delta_{ik} =  \left[\mat{P}_{SK}\right]_{si}\delta_{ik} +  \sum_{j,j'\in T} P_{sj}\sum_{n=0}^\infty\left[\mat{P}_{TT}^n\right]_{jj'}P_{j'i}\delta_{ik}  \notag\\
&= \left[\mat{P}_{SK} + \mat{P}_{ST}\mat{G}\mat{P}_{TK}\right]_{si}\delta_{ik}  = \bar{H}_{si}\bar{F}_{ik} = \PhiT{i}{s}{k}.
\end{align}

Finally, suppose $i\in T$. In order to visit $i$ at time $t$ and terminate at $k$ at time $\tau$, a path in $\mathcal{X}(\tau)$ must take one step to reach $T$, spend there $t-1$ steps before arriving at $i$, then take another $\tau-t-1$ steps in $T$ and an additional step to terminate at $k$. Considering all possible paths that visit $i$ at time $t$, we have  
\begin{equation}
Y_i(t;\tau) =
\begin{cases} 
\sum_{j,j'\in T}P_{sj}\left[\mat{P}_{TT}^{t-1}\right]_{ji}\left[\mat{P}_{TT}^{\tau-t-1}\right]_{ij'}P_{j'k} &\text{if $1\leq t < \tau$},\\
0 & \text{otherwise}.
\end{cases}
\end{equation}
It follows that
\begin{align}
\E(Z^s_{i,k}) &= \sum_{\tau=2}^\infty  \sum_{t=1}^{\tau-1} \sum_{j,j'\in T}P_{sj}\left[\mat{P}_{TT}^{t-1}\right]_{ji}\left[\mat{P}_{TT}^{\tau-t-1}\right]_{ij'}P_{j'k} = \sum_{t=1}^\infty\sum_{\tau=t+1}^\infty \sum_{j,j'\in T}P_{sj}\left[\mat{P}_{TT}^{t-1}\right]_{ji}\left[\mat{P}_{TT}^{\tau-t-1}\right]_{ij'}P_{j'k} \notag\\
&= \sum_{j,j'\in T}P_{sj}\sum_{n=0}^\infty\left[\mat{P}_{TT}^{n}\right]_{ji}\sum_{m=0}^\infty\left[\mat{P}_{TT}^{m}\right]_{ij'}P_{j'k} = \left[\mat{P}_{ST}\mat{G}\right]_{si}\left[\mat{G}\mat{P}_{TK}\right]_{ik}  = \bar{H}_{si}\bar{F}_{ik} = \PhiT{i}{s}{k}. \qedhere
\end{align}
\end{proof}

\section{Reversibility of sources and sinks}\label{apndx:reverse}

It is easy to see that in general, reversing sources and sinks produces different values for the normalized channel tensor. One important exception, however, is the case when the underlying graph is undirected and there is a single source and a single sink. 

\begin{lemma}\label{lem:reverse}
Let $\Gamma=(V,E,w)$ be an \emph{undirected} weighted graph with a weight matrix $\mat{W}$ and transition matrix $\mat{P}$ as defined in \eqref{eq:pmatrix}, with $\alpha_i\in[0,1]$ for all $i\in V$. Suppose $\Gamma$ is connected and let $s,k\in V$. Denote by $\bnPhiT$ the normalized channel tensor over $\Gamma$ with $s$ as a single source and $k$ as a single sink, and denote by $\bnPsiT$ the normalized channel tensor over $\Gamma$ with $k$ as a single source and $s$ as a single sink. Then, for all $i\in V$,
\begin{equation}
\nPhiT{i}{s}{k} = \nPsiT{i}{k}{s}.
\end{equation}
\end{lemma}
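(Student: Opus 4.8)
The plan is to exploit the \emph{reversibility} (detailed balance) of a random walk on an undirected graph. The crucial structural observation is that, because $T=V\setminus\{s,k\}$ is the same set whether $s$ is the source and $k$ the sink or vice versa, the submatrix $\mat{P}_{TT}$ — and hence the Green's function $\mat{G}=(\I-\mat{P}_{TT})^{-1}$ — is \emph{identical} for the two contexts. Only the boundary couplings differ. So the whole argument reduces to relating the off-diagonal blocks $\mat{P}_{ST},\mat{P}_{TK}$ of the forward problem to $\mat{P}_{KT},\mat{P}_{TS}$ of the reversed one, which is exactly where symmetry of $\mat{W}$ enters.

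First I would set $d_i=\sum_j W_{ij}$ and $\pi_i=d_i/\alpha_i$ (for $\alpha_i>0$; nodes with $\alpha_i=0$ are absorbed on arrival and can be handled by a continuity argument in $\balpha$ or by restricting to the support of the flow). Since $\mat{W}$ is symmetric, $\pi_i P_{ij}=\alpha_i^{-1}d_i\cdot\alpha_i W_{ij}/d_i=W_{ij}=W_{ji}=\pi_j P_{ji}$, i.e.\ $\mat{P}$ satisfies detailed balance with respect to $\pi$. A one-line induction then shows $\pi_i[\mat{P}_{TT}^n]_{ij}=\pi_j[\mat{P}_{TT}^n]_{ji}$ for every $n$ (both sides equal $\sum_{\ell}\pi_\ell[\mat{P}_{TT}^{n-1}]_{\ell i}P_{\ell j}$ after relabelling), and summing the Neumann series $\mat{G}=\sum_n\mat{P}_{TT}^n$ gives the reversibility of the Green's function, $\pi_i G_{ij}=\pi_j G_{ji}$ for $i,j\in T$.

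Next I would write out the pieces. For $i\in T$, Equations \eqref{eq:barF}--\eqref{eq:barH} give $\bar H_{si}=\sum_{j\in T}P_{sj}G_{ji}$ and $\bar F_{ik}=\sum_{j'\in T}G_{ij'}P_{j'k}$, and similarly $\bar H'_{ki}=\sum_j P_{kj}G_{ji}$, $\bar F'_{is}=\sum_{j'}G_{ij'}P_{j's}$ for the reversed context. Substituting $P_{sj}=(\pi_j/\pi_s)P_{js}$ and $G_{ji}=(\pi_i/\pi_j)G_{ij}$ collapses the telescoping $\pi$-factors to give $\bar H_{si}=(\pi_i/\pi_s)\bar F'_{is}$ and, by the mirror computation, $\bar F_{ik}=(\pi_k/\pi_i)\bar H'_{ki}$. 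Multiplying, the internal $\pi_i$ cancels, so $\PhiT{i}{s}{k}=\bar H_{si}\bar F_{ik}=(\pi_k/\pi_s)\,\bar H'_{ki}\bar F'_{is}=(\pi_k/\pi_s)\,\Psi^k_{i,s}$. The same substitutions applied termwise to $F_{sk}=P_{sk}+[\mat{P}_{ST}\mat{G}\mat{P}_{TK}]_{sk}$ yield $F_{sk}=(\pi_k/\pi_s)F'_{ks}$, where $F'_{ks}$ is the reversed-context absorption probability. Since $K=\{k\}$ here, $F_{sK}=F_{sk}$, and dividing the two displayed identities makes the common factor $\pi_k/\pi_s$ cancel, giving $\nPhiT{i}{s}{k}=\Psi^k_{i,s}/F'_{ks}=\nPsiT{i}{k}{s}$ for every transient $i$. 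The two boundary nodes are immediate from \eqref{eq:phieq1}: at $i=s$ and $i=k$ both normalized tensors equal $F_{sk}/F_{sk}=1$ (resp.\ $F'_{ks}/F'_{ks}=1$).

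I expect the only real obstacle to be purely bookkeeping: tracking the $\pi$-factors through the double sums in $\PhiT{i}{s}{k}$ and $F_{sk}$ and verifying that they combine to the \emph{same} prefactor $\pi_k/\pi_s$ in numerator and denominator, so that normalization cancels it. The conceptual content is entirely in the detailed-balance identity $\pi_i G_{ij}=\pi_j G_{ji}$; once that is established, the remainder is routine algebra, with the minor caveat of justifying the reversing measure when some $\alpha_i$ vanish.
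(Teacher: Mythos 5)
Your proposal is correct and follows essentially the same route as the paper: establish detailed balance for $\mat{P}$ on the undirected graph (your reversing measure $\pi_i=d_i/\alpha_i$ is just the reciprocal of the paper's $\alpha_i/d_i$, with the same identity $\pi_iP_{ij}=\pi_jP_{ji}$), propagate it through the Neumann series to get $\pi_iG_{ij}=\pi_jG_{ji}$, and then substitute into $\bar{H}_{si}\bar{F}_{ik}/F_{sk}$ so that the common prefactor $\pi_k/\pi_s$ cancels under normalization. Your treatment is, if anything, slightly more careful than the paper's, since you spell out the induction for $\mat{P}_{TT}^n$ and flag the degenerate case $\alpha_i=0$, which the paper's proof passes over silently.
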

\begin{proof}
Since $\Gamma$ is an undirected graph, it satisfies the detailed balance equation $\pi_y P_{xy} = P_{yx} \pi_x$
for all $x,y\in V$, where $\pi_x = \alpha_x / \sum_{z\in V} W_{xz}$.  It directly follows that 
\begin{equation}
\pi_y G_{xy} = \sum_{n=0}^\infty \pi_y [\mat{P}^n_{TT}]_{xy} = \sum_{n=0}^\infty [\mat{P}^n_{TT}]_{yx}\pi_x  =  G_{yx} \pi_x. 
\end{equation}
For $i=s$ or $i=k$, one can immediately see that $\nPhiT{i}{s}{k} = 1 = \nPsiT{i}{k}{s}$. Observing that the transient state is the same for both $\bnPhiT$ and $\bnPsiT$, we have for each $i\in T$, 
\begin{align*}
\nPhiT{i}{s}{k} = \frac{\left(\sum_{j\in T} P_{sj}G_{ji} \right) \left( \sum_{j'\in T} G_{ij'}P_{j'k} \right) }{P_{sk} + \sum_{j,j'\in T} P_{sj}G_{jj'}P_{j'k}} = \frac{\left(\sum_{j\in T} \frac{\pi_s}{\pi_j} P_{js} \frac{\pi_j}{\pi_i} G_{ij} \right) \left( \sum_{j'\in T} \frac{\pi_i}{\pi_{j'}}  G_{j'i} \frac{\pi_{j'}}{\pi_k}  P_{kj'} \right) }{ \frac{\pi_{s}}{\pi_k} P_{ks} + \sum_{j,j'\in T} \frac{\pi_s}{\pi_j} P_{js} \frac{\pi_j}{\pi_{j'}} G_{j'j} \frac{\pi_{j'}}{\pi_k} P_{kj'}} = \nPsiT{i}{k}{s}.
\end{align*}
\end{proof}

\section{The role of the damping factor in the channel mode}\label{apndx:A}

Recall that $\mat{P}=\mu\mat{Q}$, where $\mu\in (0,1)$ is the uniform damping factor and $\mat{Q}$ is given in \eqref{eq:pmatrix1}. For this range of $\mu$,
the Green's function $\mat{G}=(\I-\mat{P}_{TT})^{-1}=\sum_{n=0}^\infty \mat{P}_{TT}^n=\sum_{n=0}^\infty \mat{Q}_{TT}^n\mu^n$ is well-defined (see \citep{SY07}, Proposition 2.2) and hence the solution matrices $\mat{\bar{F}}$ and $\mat{\bar{H}}$ from Equations (\ref{eq:barF}--\ref{eq:barH}) are well defined and continuous as functions of $\mu$. As $\mu\downarrow 0$, all the damping factors in $\balpha$ uniformly tend to $0$ and $\mat{P}\to \mat{0}$.
However, we will show in \ref{subsec:damping0} that the normalized channel tensor is well-defined in the limit as $\mu\downarrow 0$ (provided it is well defined for other values of $\mu$). 

At the other extreme, as $\mu\uparrow 1$ and $\mat{P}\to \mat{Q}$, the Green's function may not exist for every choice of boundary nodes, since the spectral radius of $\mat{Q}_{TT}$ may be equal to $1$. If the vertex set is restricted to $V(K)$, the set of all nodes connected through a directed path to at least one sink, then by Proposition 2.1 of \citep{SY07}, the Green's function is well-defined for $\mu=1$ as well. Also note that for a channel tensor $\bPhiT$ to be non-trivial (i.e. non-zero everywhere), it is also necessary that each source is connected to at least one sink through a directed path, or equivalently, that $F_{sK} > 0$ for all $s\in S$.

\subsection{Path lengths}

The damping parameter $\mu$ controls the distribution of lengths of the paths (or the times) a random walk emitted from a source takes before being absorbed at a sink.

For nodes $s\in S$ and $k\in K$, let $L_{sk}$ (more precisely, $L_{sk}(\mu)$) denote the random variable giving the length of the path (or a number of steps) taken by a random walk originating at $s$ and terminating at $k$. At least one such path from $s$ to $k$ exists if and only if $F_{sk} > 0$. The underlying probability density $\Pr(L_{sk}=n)$ is given by
\begin{equation}
\Pr(n) = \frac{1}{F_{sk}} \begin{cases}
P_{sk} & \text{for $n=1$;}\\
\left[\mat{P}_{ST}\mat{P}^{n-2}_{TT}\mat{P}_{TK} \right]_{sk} & \text{for $n\geq 2$.}
\end{cases}
\end{equation}
Let $M_{L_{sk}(\mu)}$ denote the moment generating function for $L_{sk}$ and let $C_{L_{sk}(\mu)}\equiv \log M_{L_{sk}(\mu)}$ denote its cumulant generating function. Let us write $F_{sk}$ as a function of $\mu$:
\begin{align}
F_{sk}(\mu)  
& = Q_{sk}\mu + \sum_{n=2}^\infty \left[\mat{Q}_{ST}\mat{Q}^{n-2}_{TT}\mat{Q}_{TK} \right]_{sk}\mu^n,
\end{align}
and observe that
\begin{align}
M_{L_{sk}(\mu)}(t) & = \sum_{n=0}^\infty \Pr(n)e^{nt} = P_{sk}e^{t} + \sum_{n=2}^\infty \left[\mat{P}_{ST}\mat{P}^{n-2}_{TT}\mat{P}_{TK} \right]_{sk} e^{nt} \notag \\
& = Q_{sk}\mu e^{t} + \sum_{n=2}^\infty \left[\mat{Q}_{ST}\mat{Q}^{n-2}_{TT}\mat{Q}_{TK} \right]_{sk}\mu^n e^{nt} = F_{sk}(\mu e^t).
\end{align}
Thus, all moments and cumulants of $L_{sk}$ can be expressed in terms of the Green's function $\mat{G}$ and its related quantities $\mat{F}$, $\mat{H}$ and $\bPhiT$, both directly and in terms of derivatives of their entires with respect to $\mu$. In particular, 
\begin{align}
\E(L_{sk}) & = C'_{L_{sk}(\mu)}(0) 
= \frac{\frac{\partial}{\partial t} F_{sk}(\mu e^t) } {F_{sk}(\mu e^t)}\Big\vert_{t=0} 
= \frac{\mu e^t F'_{sk}(\mu e^t)} {F_{sk}(\mu e^t)}\Big\vert_{t=0}
= \frac{\mu F'_{sk}(\mu)} {F_{sk}(\mu)}. \label{eq:mean4}
\end{align}
Using the easily provable identity $\sum_{n=0}^\infty (n+2) \mat{P}_{TT}^n = \mat{G}^2 + \mat{G}$, we have
\begin{align}
F'_{sk}(\mu) & = Q_{sk} + \sum_{n=2}^\infty \left[\mat{Q}_{ST}\mat{Q}^{n-2}_{TT}\mat{Q}_{TK} \right]_{sk} n\mu^{n-1}  = \frac{1}{\mu}\left( P_{sk} +  \sum_{n=0}^\infty (n+2) \left[\mat{P}_{ST}\mat{P}^{n}_{TT}\mat{P}_{TK} \right]_{sk} \right) \notag \\
& = \frac{1}{\mu} \left( P_{sk} + \left[\mat{P}_{ST}(\mat{G}+\mat{G}^2)\mat{P}_{TK} \right]_{sk} \right) = \frac{1}{\mu} \left( F_{sk} + \left[\mat{P}_{ST}\mat{G}^2\mat{P}_{TK} \right]_{sk} \right). \label{eq:Fprime}
\end{align}
Therefore, by \eqref{eq:mean4},
\begin{align}
\E(L_{sk}) = 1 + \frac{ \left[\mat{P}_{ST}\mat{G}^2\mat{P}_{TK}\right]_{sk}}{F_{sk}}  = 1 + \sum_{i\in T} \frac{H_{si}F_{ik}}{F_{sk}}  = 1 + \sum_{i\in T}\frac{\PhiT{i}{s}{k}}{F_{sk}}, 
\end{align}
and we obtain the following
\begin{lemma}\label{lem:meanpathlen1}
Let $s\in S$, let $k\in K$ and let $\mu\in (0,1)$. Suppose $F_{sk} > 0$. Then,
\begin{equation}
T_{sk}=\E(L_{sk}) = 1 + \sum_{i\in T}\frac{\PhiT{i}{s}{k}}{F_{sk}} = \frac{\mu}{F_{sk}}\frac{\partial F_{sk}}{\partial\mu}.
\end{equation}
\end{lemma}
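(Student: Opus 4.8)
The plan is to extract both equalities from the single generating function $F_{sk}(\mu)=\sum_{n\geq 1}w_n(\mu)$, where $w_1=P_{sk}$ and $w_n=[\mat{P}_{ST}\mat{P}_{TT}^{n-2}\mat{P}_{TK}]_{sk}$ for $n\geq 2$ are the (unnormalized) total weights of the length-$n$ paths from $s$ to $k$, so that $\Pr(L_{sk}=n)=w_n/F_{sk}$. The one structural observation I would use is that each $w_n$ is homogeneous of degree $n$ in $\mu$, since it is built from exactly $n$ copies of $\mat{P}=\mu\mat{Q}$; consequently the Euler operator $\mu\,\partial_\mu$ reproduces the factor $n$, giving $\mu F'_{sk}(\mu)=\sum_{n\geq 1}n\,w_n$. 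Because $\E(L_{sk})=\sum_{n\geq 1}n\,\Pr(L_{sk}=n)=\frac{1}{F_{sk}}\sum_{n\geq 1}n\,w_n$, this already yields the rightmost equality $T_{sk}=\mu F'_{sk}(\mu)/F_{sk}(\mu)$. (Equivalently, one checks that $M_{L_{sk}(\mu)}(t)=F_{sk}(\mu e^{t})/F_{sk}(\mu)$ and differentiates its logarithm at $t=0$; the constant $F_{sk}(\mu)$ drops out and leaves the same expression.)

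For the middle equality it remains to evaluate $\sum_{n\geq 1}n\,w_n$ in closed form. Shifting the index by $m=n-2$ in the tail gives $\sum_{n\geq 1}n\,w_n=P_{sk}+\sum_{m\geq 0}(m+2)[\mat{P}_{ST}\mat{P}_{TT}^{m}\mat{P}_{TK}]_{sk}$, and the key ingredient is the identity $\sum_{m\geq 0}(m+2)\mat{P}_{TT}^{m}=\mat{G}+\mat{G}^2$, which follows from $\mat{G}=\sum_{m\geq 0}\mat{P}_{TT}^m$ together with $\mat{G}^2=\sum_{m\geq 0}(m+1)\mat{P}_{TT}^m$. Substituting and using $F_{sk}=P_{sk}+[\mat{P}_{ST}\mat{G}\mat{P}_{TK}]_{sk}$ to absorb one copy of $\mat{G}$ collapses the sum to $\sum_{n\geq 1}n\,w_n=F_{sk}+[\mat{P}_{ST}\mat{G}^2\mat{P}_{TK}]_{sk}$. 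I would then split $\mat{G}^2=\mat{G}\cdot\mat{G}$ with an intermediate summation over $T$ and recognize the two blocks from \eqref{eq:barF}--\eqref{eq:barH}, namely $[\mat{P}_{ST}\mat{G}]_{si}=\bar{H}_{si}$ and $[\mat{G}\mat{P}_{TK}]_{ik}=\bar{F}_{ik}$ for $i\in T$, so that $[\mat{P}_{ST}\mat{G}^2\mat{P}_{TK}]_{sk}=\sum_{i\in T}\bar{H}_{si}\bar{F}_{ik}=\sum_{i\in T}\PhiT{i}{s}{k}$. Dividing by $F_{sk}$ gives $T_{sk}=1+\sum_{i\in T}\PhiT{i}{s}{k}/F_{sk}$, and comparison with the previous paragraph closes the chain of equalities.

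The algebra is routine index bookkeeping; the delicate points are analytic. For $\mu\in(0,1)$ the spectral radius of $\mat{P}_{TT}=\mu\mat{Q}_{TT}$ is at most $\mu<1$, which justifies the convergence of $\mat{G}$ and of all the series above, the term-by-term differentiation implicit in $\mu F'_{sk}=\sum_n n\,w_n$, and the existence of $M_{L_{sk}(\mu)}$ on a neighborhood of $t=0$ (where $\mu e^{t}$ stays below the radius of convergence). The single genuinely structural fact to verify is $\sum_{n\geq 1}w_n=F_{sk}$, which makes $\Pr(\cdot)$ a bona fide distribution and shows that $L_{sk}$ is well defined precisely when $F_{sk}>0$; it is immediate from the series for $F_{sk}$. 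I expect the only step requiring real attention to be the closed-form evaluation of $\sum_n n\,w_n$ through the $\mat{G}+\mat{G}^2$ identity, since the index shift and the $\mu$-power bookkeeping must be carried out consistently.
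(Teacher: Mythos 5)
Your proposal is correct and takes essentially the same approach as the paper: both identify $T_{sk}=\mu F_{sk}'(\mu)/F_{sk}(\mu)$ from the power series of $F_{sk}$ in $\mu$ (your Euler-operator argument is the same computation as the paper's differentiation of the cumulant generating function $\log F_{sk}(\mu e^{t})$ at $t=0$, a route you also mention), and both then evaluate $\mu F_{sk}'$ via the identity $\sum_{n\geq 0}(n+2)\mat{P}_{TT}^{n}=\mat{G}+\mat{G}^{2}$ together with the factorization $[\mat{P}_{ST}\mat{G}^{2}\mat{P}_{TK}]_{sk}=\sum_{i\in T}\bar{H}_{si}\bar{F}_{ik}=\sum_{i\in T}\PhiT{i}{s}{k}$. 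The differences are purely cosmetic; if anything, your normalized moment generating function $F_{sk}(\mu e^{t})/F_{sk}(\mu)$ is stated more carefully than the paper's unnormalized version, for which the constant only cancels upon taking the logarithmic derivative.
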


Similarly, 
\begin{align}
\Var(L_{sk}) & = C''_{L_{sk}(\mu)}(0)  = \frac{\partial}{\partial t}\frac{\mu e^t F'_{sk}(\mu e^t)} {F_{sk}(\mu e^t)}\Big\vert_{t=0} \notag \\
& = \frac{\mu e^t F'_{sk}(\mu e^t) + \mu^2 e^{2t} F''_{sk}(\mu e^t)} {F_{sk}(\mu e^t)}  - \left(\frac{\mu e^t F'_{sk}(\mu e^t)} {F_{sk}(\mu e^t)} \right)^2 \Big\vert_{t=0} \notag \\
& =  \E(L_{sk}) + \frac{\mu^2 F''_{sk}(\mu)} {F_{sk}(\mu)} - \E^2(L_{sk}).
\end{align}
Using another easily provable identity $\sum_{n=0}^\infty (n+2)(n+1) \mat{P}_{TT}^n = 2\mat{G}^3$,
and Equation \eqref{eq:Fprime}, we have
\begin{align}
F''_{sk}(\mu) & = \sum_{n=2}^\infty \left[\mat{Q}_{ST}\mat{Q}^{n-2}_{TT}\mat{Q}_{TK} \right]_{sk} n(n-1)\mu^{n-2} 
= \frac{1}{\mu^2} \sum_{n=0}^\infty (n+2)(n+1) \left[\mat{P}_{ST}\mat{P}^{n}_{TT}\mat{P}_{TK} \right]_{sk} \notag \\
& = \frac{2}{\mu^2} \left[\mat{P}_{ST}\mat{G}^3\mat{P}_{TK} \right]_{sk}. 
\end{align}
Hence, we obtain
\begin{lemma}\label{lem:varpathlen}
Let $s\in S$, let $k\in K$ and let $\mu\in (0,1)$. Suppose $F_{sk} > 0$. Then,
\begin{equation}
\Var(L_{sk}) = \E(L_{sk}) + \frac{2\left[\mat{P}_{ST}\mat{G}^3\mat{P}_{TK} \right]_{sk}}{F_{sk}} - \E^2(L_{sk}).
\end{equation}
\end{lemma}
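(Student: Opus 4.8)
The plan is to read off $\Var(L_{sk})$ as the second cumulant of $L_{sk}$ and reduce everything to $\mu$-derivatives of $F_{sk}$, exactly as the mean was handled. Since $C_{L_{sk}(\mu)} = \log M_{L_{sk}(\mu)}$ and the variance equals the second cumulant, I would start from $\Var(L_{sk}) = C''_{L_{sk}(\mu)}(0)$ and use the already-established relation $M_{L_{sk}(\mu)}(t) = F_{sk}(\mu e^t)$. The first-derivative computation behind Lemma~\ref{lem:meanpathlen1} already gives $C'_{L_{sk}(\mu)}(0) = \mu F'_{sk}(\mu)/F_{sk}(\mu) = \E(L_{sk})$ via \eqref{eq:mean4}, so the only genuinely new work is one further differentiation plus a closed form for $F''_{sk}$.

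First I would differentiate $C_{L_{sk}(\mu)}(t) = \log F_{sk}(\mu e^t)$ a second time. Writing $u(t) = \mu e^t F'_{sk}(\mu e^t)$ and $v(t) = F_{sk}(\mu e^t)$, one has $C' = u/v$, and the key simplification is $v'(t) = u(t)$ because $\tfrac{d}{dt}F_{sk}(\mu e^t) = \mu e^t F'_{sk}(\mu e^t)$; hence $C'' = u'/v - (u/v)^2 = u'/v - (C')^2$. Since $u'(0) = \mu F'_{sk}(\mu) + \mu^2 F''_{sk}(\mu)$, evaluating at $t=0$ gives
\begin{equation*}
\Var(L_{sk}) = \E(L_{sk}) + \frac{\mu^2 F''_{sk}(\mu)}{F_{sk}(\mu)} - \E^2(L_{sk}),
\end{equation*}
which is precisely the intermediate form recorded above. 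It then remains only to identify $\mu^2 F''_{sk}(\mu)$ with $2[\mat{P}_{ST}\mat{G}^3\mat{P}_{TK}]_{sk}$.

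The main technical step, and the part I expect to be the real obstacle, is evaluating $F''_{sk}(\mu)$ in closed form. I would differentiate the power series $F_{sk}(\mu) = Q_{sk}\mu + \sum_{n\geq 2}[\mat{Q}_{ST}\mat{Q}^{n-2}_{TT}\mat{Q}_{TK}]_{sk}\mu^n$ twice, term by term (justified by absolute convergence for $\mu\in(0,1)$, as the spectral radius of $\mat{P}_{TT}$ is below $1$), to reach $F''_{sk}(\mu) = \mu^{-2}\sum_{n\geq 0}(n+2)(n+1)[\mat{P}_{ST}\mat{P}^{n}_{TT}\mat{P}_{TK}]_{sk}$, as in \eqref{eq:Fprime}. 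The coefficient $(n+2)(n+1)$ must then be resummed into the Green's function. I would obtain the needed matrix identity $\sum_{n\geq 0}(n+2)(n+1)\mat{P}_{TT}^n = 2\mat{G}^3$ by subtracting the two recorded identities: \eqref{eq:gseries2} gives $\sum_n (n+2)^2\mat{P}_{TT}^n = 2\mat{G}^3+\mat{G}^2+\mat{G}$ and \eqref{eq:gseries1} gives $\sum_n (n+2)\mat{P}_{TT}^n = \mat{G}^2+\mat{G}$, and since $(n+2)(n+1) = (n+2)^2 - (n+2)$ the difference leaves exactly $2\mat{G}^3$; alternatively one checks directly that $\mat{G}^3 = \sum_n \binom{n+2}{2}\mat{P}_{TT}^n$. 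This yields $\mu^2 F''_{sk}(\mu) = 2[\mat{P}_{ST}\mat{G}^3\mat{P}_{TK}]_{sk}$, and substituting into the displayed intermediate form produces the claimed expression for $\Var(L_{sk})$. The only care needed is the legitimacy of term-by-term differentiation and the coefficient bookkeeping $(n+2)(n+1)$ versus $(n+2)^2$; everything else is the same chain-rule machinery used for the mean.
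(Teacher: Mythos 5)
Your proposal is correct and follows essentially the same route as the paper: differentiate $C_{L_{sk}(\mu)}(t)=\log F_{sk}(\mu e^t)$ twice to get the intermediate form $\Var(L_{sk})=\E(L_{sk})+\mu^2F''_{sk}(\mu)/F_{sk}(\mu)-\E^2(L_{sk})$, then evaluate $F''_{sk}$ by term-by-term differentiation of the power series and resum $(n+2)(n+1)$ into $2\mat{G}^3$ using the identities \eqref{eq:gseries1} and \eqref{eq:gseries2}. The paper's proof is exactly this computation, so no further comparison is needed.
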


Denote by $L_{sK}$ the random variable giving the length of the path (or the number of steps) taken by a random walk originating at $s$ and terminating at any sink in $K$. This random variable is well-defined provided $s$ is connected with at least one $k\in K$ through a directed path, or equivalently, if $\max_{k\in K} F_{sk} > 0$. Let $\hat{K}(s) = \{k\in K: F_{sk} > 0\}$. Then, $L_{sK}$ can be expressed as a weighted sum of $L_{sk}$ over $k\in \hat{K}(s)$:
\begin{equation}
L_{sK} = \sum_{k\in \hat{K}(s)} \frac{F_{sk}}{F_{sK}}L_{sk}.
\end{equation}
Here $F_{sk}/F_{sK}$ gives the conditional probability of a random walker from $s$ reaching sink $k$, given that it reaches any of the sinks in $\hat{K}(s)$. Through properties of mean, we the following corollary.
\begin{corol}\label{cor:meanvar2}
Let $s\in S$ and let $\mu\in (0,1)$. Suppose $\max_{k\in K} F_{sk} > 0$. Then,
\begin{equation}
T_{sK} = \E(L_{sK}) = 1 + \sum_{i\in T}\nPhiT{i}{s}{K} = \frac{\mu}{F_{sK}}\frac{\partial F_{sK}}{\partial\mu} . \label{eq:normmean1} \\
\end{equation}
\end{corol}

Since $\E(L_{sk})$ and $\E(L_{sK})$ can be expressed in terms of sums and products of entries of $\mat{G}$, they are continuous and increasing functions of $\mu\in (0,1)$. The value of $\E(L_{sK})$ is bounded from below:
 the length of the shortest path from the source to any of the sinks. If the graph nodes are restricted to $V(K)$, $\mat{G}$ is well-defined for $\mu=1$ and $\E(L_{sK})$ is bounded and attains its maximum there. The value of the maximum varies depending on the underlying network graph and the particular context.

\subsection{Large dissipation asymptotics}\label{subsec:damping0}

For all $i,j\in V$, let $\rho(i,j)$ denote the (unweighted) length of the shortest directed path between $i$ and $j$. We allow $\rho(i,j)=\infty$ if there exists no directed path between $i$ and $j$. It is well-known that $\rho$ is a (not necessarily symmetric) distance that satisfies the triangle inequality, that is, for all $i,j,k\in V$, $\rho(i,j)+\rho(j,k)\geq \rho(i,k)$. For any source $s\in S$, recall that $\rho(s,K)=\min_{k\in K} \rho(s,k)$ and let $K_s=\{k\in K: \rho(s,k)=\rho(s,K) \}$, the set of all the sinks closest to $s$.

\begin{thm}\label{thm:nodes1}
Let $s\in S$, $i\in T$ and $k\in K$ such that $\rho(s,i)$ and $\rho(i,k)$ are both finite. Then, if $k\in K_s$ and $i$ lies on the shortest path from $s$ to $k$,
\begin{equation}\label{eq:nPhiTlim}
 \lim_{\mu\downarrow 0} \nPhiT{i}{s}{k} = \frac{ \left[\mat{Q}_{ST}\mat{Q}^{\rho(s,i)-1}_{TT} \right]_{si} \left[\mat{Q}^{\rho(i,k)-1}_{TT}\mat{Q}_{TK} \right]_{ik} }{ \sum_{k'\in K_s}\left[\mat{Q}_{ST}\mat{Q}^{\rho(s,k)-2}_{TT}\mat{Q}_{TK}\right]_{sk'}}.
\end{equation}
Otherwise, $\lim_{\mu\downarrow 0} \nPhiT{i}{s}{k} = 0$.
\end{thm}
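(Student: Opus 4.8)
The plan is to treat every quantity as a power series in $\mu$ and extract the $\mu\downarrow 0$ behaviour from the lowest surviving order. Since $\mat{P}=\mu\mat{Q}$ and $\mat{G}=\sum_{n\ge 0}\mu^n\mat{Q}_{TT}^n$, I would first use the explicit block forms \eqref{eq:barF}--\eqref{eq:barH} to write, for a transient node $i\in T$, $\PhiT{i}{s}{k}=[\mat{P}_{ST}\mat{G}]_{si}\,[\mat{G}\mat{P}_{TK}]_{ik}$, and expand each factor as
\begin{equation*}
[\mat{P}_{ST}\mat{G}]_{si}=\sum_{\ell\ge 1}\mu^{\ell}\,[\mat{Q}_{ST}\mat{Q}_{TT}^{\ell-1}]_{si},\qquad [\mat{G}\mat{P}_{TK}]_{ik}=\sum_{\ell\ge 1}\mu^{\ell}\,[\mat{Q}_{TT}^{\ell-1}\mat{Q}_{TK}]_{ik}.
\end{equation*}
Every coefficient here is a sum of products of nonnegative entries of $\mat{Q}$, so there is no cancellation: the coefficient of $\mu^{\ell}$ in the first (resp.\ second) factor is positive exactly when there is a directed path of length $\ell$ from $s$ to $i$ (resp.\ from $i$ to $k$) whose interior vertices are all transient. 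Hence the leading exponent of the first factor is $\rho(s,i)$ with leading coefficient $[\mat{Q}_{ST}\mat{Q}_{TT}^{\rho(s,i)-1}]_{si}$, and that of the second is $\rho(i,k)$ with leading coefficient $[\mat{Q}_{TT}^{\rho(i,k)-1}\mat{Q}_{TK}]_{ik}$.

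I would then do the same for the normalizer. Writing $F_{sk'}=\mu Q_{sk'}+\sum_{\ell\ge 2}\mu^{\ell}[\mat{Q}_{ST}\mat{Q}_{TT}^{\ell-2}\mat{Q}_{TK}]_{sk'}$, the same nonnegativity argument shows $F_{sk'}$ has leading exponent $\rho(s,k')$. Summing over $k'\in K$, the leading exponent of $F_{sK}$ is $\min_{k'}\rho(s,k')=\rho(s,K)$, and, since no cancellation occurs, its leading coefficient is exactly $\sum_{k'\in K_s}[\mat{Q}_{ST}\mat{Q}_{TT}^{\rho(s,K)-2}\mat{Q}_{TK}]_{sk'}$, collecting precisely the nearest sinks $K_s$ (with $\rho(s,K)\ge 2$ in the nontrivial case, since a transient $i$ sits on the path).

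The conclusion now follows by comparing orders. The numerator $\PhiT{i}{s}{k}$ has leading exponent $\rho(s,i)+\rho(i,k)$, so
\begin{equation*}
\nPhiT{i}{s}{k}=\frac{\PhiT{i}{s}{k}}{F_{sK}}\sim \mu^{\,\rho(s,i)+\rho(i,k)-\rho(s,K)}\cdot(\text{ratio of leading coefficients}).
\end{equation*}
The triangle inequality gives $\rho(s,i)+\rho(i,k)\ge\rho(s,k)\ge\rho(s,K)$, so the exponent is nonnegative, and it vanishes precisely when both inequalities are equalities, i.e.\ when $k\in K_s$ and $i$ lies on a shortest $s$-to-$k$ path. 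In that case the surviving term is exactly the stated ratio of leading coefficients; in every other case the exponent is strictly positive, forcing the limit to be $0$, which is the ``otherwise'' alternative.

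The main obstacle I expect is the bookkeeping that identifies the leading power-series exponents with the graph distances $\rho$ rather than with channel-restricted distances. Ruling out intermediate sinks is clean: if a shortest $s$-to-$i$ path passed through a sink $k''$ then $\rho(s,i)\ge\rho(s,k'')\ge\rho(s,K)$, which is impossible when $i$ strictly precedes the nearest sink on a shortest path, so the relevant shortest paths have transient interior and the coefficients $[\mat{Q}_{ST}\mat{Q}_{TT}^{\rho(s,i)-1}]_{si}$ are genuinely positive. The point needing the most care is confirming that these shortest paths are not forced to route through another source (whose incoming edges are absent from $\tilde{\mat{P}}$), so that the channel exponents coincide with $\rho$. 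Once that is settled, the asymptotics are rigorous, because all the series involved have nonnegative coefficients and strictly positive leading coefficients, leaving no room for cancellation and guaranteeing that the $\mu\downarrow 0$ limit exists and equals the quoted value.
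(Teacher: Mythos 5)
Your proposal is correct and follows essentially the same route as the paper's proof: expand $[\mat{P}_{ST}\mat{G}]_{si}$, $[\mat{G}\mat{P}_{TK}]_{ik}$ and $F_{sK}$ as power series in $\mu$ with nonnegative coefficients, identify the leading exponents with $\rho(s,i)$, $\rho(i,k)$ and $\rho(s,K)$, and compare exponents via the triangle inequality, with your parenthetical remark on $\rho(s,K)\geq 2$ playing the role of the paper's separate case $\xi=1$ (sink adjacent to the source). The source-routing subtlety you flag as the delicate point is not resolved in the paper either: its proof simply asserts, citing Lemma A.3 of \citep{SY07}, that $[\mat{P}_{TT}^{\rho(u,v)}]_{uv}>0$ for $u,v\in T$, i.e.\ it implicitly takes $\rho$ to be realized by paths with transient interior, which is exactly the assumption your argument needs.
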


\begin{proof}
Let $s\in S$, $i\in T$ and $k\in K$. Since, $\rho(s,i)$ and $\rho(i,k)$ are finite, it follows that $\rho(s,k)$ is also finite, that is, $k$ is reachable from $s$ through $i$ and the normalized channel tensor $\bnPhiT$ is well defined for all $\mu\in(0,1)$. Recall that
\begin{equation}\label{eq:nPhiT2}
\nPhiT{i}{s}{k} = \frac{\PhiT{i}{s}{k}}{F_{sK}} = \frac{[\mat{P}_{ST}\mat{G}]_{si}[\mat{G}\mat{P}_{TK}]_{ik}}{\sum_{k'\in K}F_{sk'}}
\end{equation}
where $F_{sk'} = [\mat{P}_{SK}+ \mat{P}_{ST}\mat{G}\mat{P}_{TK}]_{sk'}$.

Let $u,v\in T$ and let $d=\rho(u,v)$. It can be easily shown (see Lemma A.3 from \citep{SY07} for a partial proof) that $\left[\mat{P}_{TT}^n\right]_{uv}=0$ for all $n<d$ and that $\left[\mat{P}_{TT}^{d}\right]_{uv}>0$. Therefore, 
\begin{equation*}
G_{uv} =  \sum_{n=d}^\infty \left[\mat{P}_{TT}^n \right]_{uv} = \sum_{n=d}^\infty \mu^n\left[\mat{Q}_{TT}^n \right]_{uv} = \mu^{d} \left[\mat{Q}^{d}_{TT}\right]_{uv} + O(\mu^{d+1})
\end{equation*}
as $\mu\downarrow 0$. Hence,
\begin{align}
[\mat{P}_{ST}\mat{G}]_{si} &= \sum_{j\in T} \mu^{\rho(j,i)+1} Q_{sj}\left[\mat{Q}^{\rho(j,i)}_{TT} \right]_{ji} + O(\mu^{\rho(j,i)+2}) = \mu^{\rho(s,i)} \left[\mat{Q}_{ST}\mat{Q}^{\rho(s,i)-1}_{TT} \right]_{si} + O(\mu^{\rho(s,i)+1}) ,\label{eq:limitH}\\
[\mat{G}\mat{P}_{TK}]_{ik} &= \sum_{j\in T} \mu^{\rho(i,j)+1} \left[\mat{Q}^{\rho(i,j)}_{TT} \right]_{ij}Q_{jk} + O(\mu^{\rho(i,j)+2}) = \mu^{\rho(i,k)} \left[\mat{Q}^{\rho(i,k)-1}_{TT}\mat{Q}_{TK} \right]_{ik} + O(\mu^{\rho(i,k)+1}) \label{eq:limitF}.
\end{align}
Let $\xi=\rho(s,k'')$, where $k''\in K_s$. We will consider the denominator of Equation (\ref{eq:nPhiT2}) under two separate cases, $\xi=1$ and $\xi>1$. 

If $\xi>1$, for all $k'\in K$, the vertices $s$ and $k'$ are not adjacent and thus $P_{sk'}=0$. Hence, since $s$ and $k'$ are connected, there exist $j,j'\in T$ such that $\rho(s,k')=\rho(s,j)+\rho(j,j')+\rho(j',k')=\rho(j,j')+2$, implying
\begin{align}
[\mat{P}_{ST}\mat{G}\mat{P}_{TK}]_{sk'} &= \sum_{j,j'\in T} \mu^{\rho(j,j')+2} Q_{sj} \left[ \mat{Q}^{\rho(j,j')}_{TT} \right]_{jj'}  Q_{j'k'} + O(\mu^{\rho(j,j')+3}) \notag \\
& = \mu^{\rho(s,k')} \left[\mat{Q}_{ST}\mat{Q}^{\rho(s,k')-2}_{TT}\mat{Q}_{TK}\right]_{sk'} + O(\mu^{\rho(s,k')+1}).  \label{eq:Fsk1}
\end{align}
Similarly, 
\begin{align}
F_{sK} & = \sum_{k'\in K_s} \mu^{\xi} \left[\mat{Q}_{ST}\mat{Q}^{\xi-2}_{TT}\mat{Q}_{TK}\right]_{sk'} + O(\mu^{\xi+1}),  \label{eq:Fsk2}
\end{align}
and, as $\mu\downarrow 0$,
\begin{equation}
\nPhiT{i}{s}{k} \to \frac{ \mu^{\rho(s,i)+\rho(i,k)}  \left[\mat{Q}_{ST}\mat{Q}^{\rho(s,i)-1}_{TT} \right]_{si} \left[\mat{Q}^{\rho(i,k)-1}_{TT}\mat{Q}_{TK} \right]_{ik} }{ \mu^\xi\sum_{k'\in K_s}\left[\mat{Q}_{ST}\mat{Q}^{\xi-2}_{TT}\mat{Q}_{TK}\right]_{sk'}} 
\end{equation}

By the triangle inequality and our assumptions on $s$, $i$ and $k$, 
\begin{equation}
\rho(s,i)+\rho(i,k)\geq \rho(s,k)\geq \xi.
\end{equation}
The first inequality becomes an equality if and only if $i$ lies on the shortest path between $s$ and $k$ while the second is an equality if and only if $k\in K_s$. Therefore, if the assumption of the theorem is satisfied, the value of $\nPhiT{i}{s}{k}$ converges to the value of the right hand side of Equation (\ref{eq:nPhiTlim}), while otherwise $\lim_{\mu\downarrow 0} \nPhiT{i}{s}{k}=0$.

On the other hand, if $\xi=1$, $F_{sK}\to \sum_{k'\in K_s} \mu Q_{sk'}+O(\mu^{2})$ and therefore, since $\rho(s,i)+\rho(i,k)\geq 2$,  $\nPhiT{i}{s}{k}\to 0$ as $\mu\downarrow 0$. 
\end{proof}
We have therefore shown that, as $\mu\downarrow 0$, only the nodes associated with the shortest path from each source to the sink(s) closest to it will have positive values of the normalized channel tensor -- all other entries will be exactly $0$.

\begin{corol}
Let $s\in S$ and suppose the normalized channel tensor $\bnPhiT$ is well defined for all $\mu\in (0,1)$. Then,
\begin{equation}
\lim_{\mu\downarrow 0} \E(L_{sK}) = \rho(s,k), 
\end{equation}
where $k\in K_s$.
\end{corol}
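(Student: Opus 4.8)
The plan is to exploit the closed form $T_{sK} = \E(L_{sK}) = \frac{\mu}{F_{sK}}\frac{\partial F_{sK}}{\partial\mu}$ from Corollary~\ref{cor:meanvar2}, together with the small-$\mu$ expansion of $F_{sK}$ already extracted in the proof of Theorem~\ref{thm:nodes1}. Write $\xi = \rho(s,K)$, so that $\rho(s,k) = \xi$ for every $k\in K_s$ and the right-hand side of the claim is unambiguous. The essential input is that $F_{sK}$, viewed as a function of $\mu$, is the convergent power series $F_{sK}(\mu) = \sum_{n\geq 1} a_n\mu^n$ (obtained by summing the series for $F_{sk}(\mu)$ over $k\in K$), whose lowest-order nonvanishing coefficient occurs at $n = \xi$; indeed, Equation~\eqref{eq:Fsk2} gives $a_\xi = \sum_{k'\in K_s}[\mat{Q}_{ST}\mat{Q}^{\xi-2}_{TT}\mat{Q}_{TK}]_{sk'} > 0$, the total weight of the shortest directed paths from $s$ to the closest sinks (the degenerate case $\xi = 1$, where $s$ is adjacent to a sink and the series starts at $\mu^1$ with $a_1 = \sum_{k'\in K_s} Q_{sk'}$, is handled at the end of that proof).

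First I would record that $F_{sK}(\mu) = a_\xi\mu^\xi + O(\mu^{\xi+1})$ with $a_\xi > 0$. Since $F_{sK}$ is a power series, it may be differentiated term by term on $(0,1)$, giving $F'_{sK}(\mu) = \xi a_\xi\mu^{\xi-1} + O(\mu^{\xi})$. Substituting into the derivative formula yields
\begin{equation*}
\E(L_{sK}) = \frac{\mu F'_{sK}(\mu)}{F_{sK}(\mu)} = \frac{\xi a_\xi\mu^\xi + O(\mu^{\xi+1})}{a_\xi\mu^\xi + O(\mu^{\xi+1})},
\end{equation*}
and dividing numerator and denominator by $a_\xi\mu^\xi$ (legitimate since $a_\xi\neq 0$) and letting $\mu\downarrow 0$ gives the limit $\xi = \rho(s,K) = \rho(s,k)$ for any $k\in K_s$. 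Equivalently, one may observe that $\E(L_{sK}) = \mu\frac{d}{d\mu}\log F_{sK}(\mu)$ and read off the limit directly from the leading logarithmic derivative of a power series.

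An alternative route, closer in spirit to the preceding theorem, uses $\E(L_{sK}) = 1 + \sum_{i\in T}\nPhiT{i}{s}{K}$ from~\eqref{eq:normmean1}. Since the sum is finite, the limit passes inside, and Theorem~\ref{thm:nodes1} shows that only transient $i$ lying on a shortest path to some $k\in K_s$ contribute. For fixed $k\in K_s$, grouping these $i$ by their distance $d = \rho(s,i)\in\{1,\dots,\xi-1\}$ and using the multiplicativity of path weights (each shortest $s\to k$ path passes through exactly one node at each level $d$, factorizing its weight as $[\mat{Q}_{ST}\mat{Q}^{d-1}_{TT}]_{si}\,[\mat{Q}^{\xi-d-1}_{TT}\mat{Q}_{TK}]_{ik}$) collapses the numerator at each level to the full shortest-path weight $[\mat{Q}_{ST}\mat{Q}^{\xi-2}_{TT}\mat{Q}_{TK}]_{sk}$. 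Summing over the $\xi-1$ intermediate levels and then over $k\in K_s$ cancels the common denominator $\sum_{k'\in K_s}[\mat{Q}_{ST}\mat{Q}^{\xi-2}_{TT}\mat{Q}_{TK}]_{sk'}$ and leaves $\xi-1$, whence $\E(L_{sK})\to 1 + (\xi-1) = \xi$.

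I expect the main obstacle to be the bookkeeping of the leading order. For the derivative approach this is the justification that the lowest surviving power of $\mu$ in $F_{sK}$ is exactly $\xi$ with a strictly positive coefficient, which is precisely the content of~\eqref{eq:Fsk2}, together with term-by-term differentiation of the power series and separate treatment of the case $\xi = 1$. For the combinatorial approach the delicate point is the level-by-level counting argument — verifying that summing the factorized weights over all transient nodes at a fixed level reconstitutes the total shortest-path weight, which relies on every shortest path meeting each intermediate level in a single node. Since the derivative identity already carries the needed asymptotics in compact form, I would present the first route as the main argument and relegate the counting interpretation to a remark.
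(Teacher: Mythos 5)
Your proposal is correct, and your primary argument takes a genuinely different route from the paper. The paper proves the corollary via the \emph{other} identity in Corollary~\ref{cor:meanvar2}, namely $\E(L_{sK}) = 1 + \sum_{i\in T}\nPhiT{i}{s}{K}$: it passes the limit through the finite sum, invokes Theorem~\ref{thm:nodes1} to discard every node except those on shortest paths to the nearest sinks, partitions the surviving nodes into levels $\Pi_s(m)$, $m=1,\dots,d-1$, by distance from $s$, and collapses each level through the matrix product
\begin{equation*}
\sum_{i\in T}\left[\mat{Q}_{ST}\mat{Q}^{m-1}_{TT}\right]_{si}\left[\mat{Q}^{d-m-1}_{TT}\mat{Q}_{TK}\right]_{ik''} = \left[\mat{Q}_{ST}\mat{Q}^{d-2}_{TT}\mat{Q}_{TK}\right]_{sk''},
\end{equation*}
so that each level contributes exactly $1$ and the total is $1+(d-1)=d$. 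This is precisely your ``alternative route,'' and the delicate point you flag there (each level summing back to the full shortest-path weight) is exactly this matrix identity; the extension of the sum from $\Pi_s(m)$ to all of $T$ is harmless because the triangle inequality forces every term with a node off the shortest paths to vanish. Your main route --- $\E(L_{sK}) = \mu\frac{\partial}{\partial\mu}\log F_{sK}$ combined with $F_{sK}(\mu) = a_\xi\mu^\xi + O(\mu^{\xi+1})$, $a_\xi>0$, from \eqref{eq:Fsk2} --- is shorter and needs strictly less machinery: it bypasses the full statement of Theorem~\ref{thm:nodes1}, using only the expansion of $F_{sK}$ established inside that theorem's proof, term-by-term differentiation of a power series convergent for $\abs{\mu}<1$, and the separate $\xi=1$ case, all of which you handle. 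What your route buys is compactness and a clean conceptual summary: the limiting mean path length is the order of vanishing of $F_{sK}$ at $\mu=0$, read off from its logarithmic derivative. What the paper's route buys is structural information: it exhibits \emph{where} the limiting flow resides (one unit of visits per intermediate level of the shortest paths), which is the interpretation the surrounding text emphasizes. Both are sound; presenting the logarithmic-derivative argument as primary and the counting argument as a remark, as you propose, is a legitimate reorganization.
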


\begin{proof}
Let $s\in S$, let $k\in K_s$ and let $d=\rho(s,k)$. For $m=1,2\ldots d-1$, let $\Pi_s(m) = \{i\in T: \rho(s,i)=m\, \text{and}\, \rho(s,i)+\rho(i,k)=d\}$. The set $\Pi_s(m)$ consists of all transient nodes that are at the distance $m$ from $s$ on a shortest path from $s$ to any of the sinks closest to $s$. By Theorem \ref{thm:nodes1},
\begin{align*}
\lim_{\mu\downarrow 0}\sum_{k''\in K}\sum_{i\in T} \nPhiT{i}{s}{k''} & = \sum_{k''\in K_s}\sum_{m=1}^{d-1}\sum_{i\in \Pi_s(m)} \frac{ \left[\mat{Q}_{ST}\mat{Q}^{m-1}_{TT} \right]_{si} \left[\mat{Q}^{d-m-1}_{TT}\mat{Q}_{TK} \right]_{ik''} }{ \sum_{k'\in K_s}\left[\mat{Q}_{ST}\mat{Q}^{\rho(s,k)-2}_{TT}\mat{Q}_{TK}\right]_{sk'}} \\
& = \sum_{k''\in K_s}\sum_{m=1}^{d-1}\sum_{i\in T} \frac{ \left[\mat{Q}_{ST}\mat{Q}^{m-1}_{TT} \right]_{si} \left[\mat{Q}^{d-m-1}_{TT}\mat{Q}_{TK} \right]_{ik''} }{ \sum_{k'\in K_s}\left[\mat{Q}_{ST}\mat{Q}^{d-2}_{TT}\mat{Q}_{TK}\right]_{sk'}} \\
& = \sum_{m=1}^{d-1} \frac{ \sum_{k''\in K_s} \left[\mat{Q}_{ST}\mat{Q}^{d-2}_{TT}\mat{Q}_{TK}\right]_{sk''} }{ \sum_{k'\in K_s}\left[\mat{Q}_{ST}\mat{Q}^{d-2}_{TT}\mat{Q}_{TK}\right]_{sk'}}\\
& = d-1.
\end{align*}
Therefore, by Equation \eqref{eq:normmean1}, 
\[ \lim_{\mu\downarrow 0} \E(L_{sK}) = 1 + \lim_{\mu\downarrow 0}\sum_{k'\in K}\sum_{i\in T} \nPhiT{i}{s}{k'} = \rho(s,k).\]
\end{proof}

\section{Normalized evolution operator}\label{app:normop}

To make our arguments more concise we will here additionally assume, without loss of generality, that every node is connected to a sink via a directed path, that is, that $V_K=V$. 

Note that $\mat{N}$ is indeed well defined in the limit as $\mu\downarrow 0$. For example, if $i,j\in T$, we have from \eqref{eq:limitF}
\begin{align}
N_{ij} & = \frac{\tilde{P}_{ij} \sum_{k\in K} [\mat{G}\mat{P}_{TK}]_{jk} f_k}{ \sum_{k'\in K} [\mat{G}\mat{P}_{TK}]_{ik'} f_{k'}}  \to \frac{  \sum_{k\in K} \mu^{\rho(j,k)+1} Q_{ij} \left[\mat{Q}^{\rho(j,k)-1}_{TT}\mat{Q}_{TK} \right]_{jk} f_k }{ \sum_{k'\in K} \mu^{\rho(i,k')}  \left[\mat{Q}^{\rho(i,k')-1}_{TT}\mat{Q}_{TK} \right]_{ik'} f_{k'}} \notag \\
& = \frac{ \sum_{k\in K} \mu^{\rho(j,k)+1} Q_{ij} \left[\mat{Q}^{\rho(j,k)-1}_{TT}\mat{Q}_{TK} \right]_{jk} f_k }{\mu^{\rho(i,K)} \sum_{k'\in K} \mu^{\rho(i,k')-\rho(i,K)}  \left[\mat{Q}^{\rho(i,k')-1}_{TT}\mat{Q}_{TK} \right]_{ik'} f_{k'}} \notag \\
& =  
\begin{cases}
0 & \text{if $\rho(j,K) > \rho(i,K) - 1$,} \\
\frac{ \sum_{k\in K} Q_{ij} \left[\mat{Q}^{\rho(i,K)-2}_{TT}\mat{Q}_{TK} \right]_{jk} f_k }{\sum_{k'\in K} \left[\mat{Q}^{\rho(i,K)-1}_{TT}\mat{Q}_{TK} \right]_{ik'} f_{k'}} & \text{if $\rho(j,K) = \rho(i,K) - 1$}.
\end{cases}
\end{align}
Similar well defined limits for $N_{ij}$, with $i,j\in V$, can also be easily demonstrated using the results from Appendix~\ref{subsec:damping0}.

\begin{prop}\label{prop:normop2a}
Let $\vec{f}$ denote an arbitrary vector over $V$.
Suppose $i\in S\cup T$. Then,
\begin{equation}\label{eq:equiv}
\sum_{j\in V} N_{ij} = 1 \iff f_i = \sum_{k\in K} \bar{F}_{ik} f_k.
\end{equation}
\end{prop}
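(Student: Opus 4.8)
The plan is to exploit the \emph{harmonicity} (discrete mean-value property) of the extended absorbing matrix $\mat{\bar{F}}$ with respect to the context matrix $\tilde{\mat{P}}$, and then to read the row-stochasticity of $\mat{N}$ off this property together with the uniqueness of the harmonic extension. The first step is to record the mean-value identity satisfied by $\mat{\bar{F}}$. Treating $S\cup T$ as the transient set and $K$ as the absorbing boundary for $\tilde{\mat{P}}$, the matrix $\mat{F}$ solves the Laplace equation \eqref{eqn:sink2} (now with transient block $\tilde{\mat{P}}_{(S\cup T)(S\cup T)}$ and boundary block $\tilde{\mat{P}}_{(S\cup T)K}$), which in fixed-point form is $\mat{F}=\tilde{\mat{P}}_{(S\cup T)(S\cup T)}\mat{F}+\tilde{\mat{P}}_{(S\cup T)K}$. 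Writing this entrywise and using the boundary extension $\bar{F}_{k'k}=\delta_{k'k}$ from \eqref{eq:barF} to rewrite the inhomogeneous term as $\tilde{P}_{ik}=\sum_{k'\in K}\tilde{P}_{ik'}\bar{F}_{k'k}$, I obtain for every $i\in S\cup T$ and $k\in K$ the single clean identity $\bar{F}_{ik}=\sum_{j\in V}\tilde{P}_{ij}\bar{F}_{jk}$.

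Next I would set $\phi_i:=\sum_{k\in K}\bar{F}_{ik}f_k$, so that the right-hand condition in \eqref{eq:equiv} is exactly $f_i=\phi_i$, and since $\bar{F}_{k'k}=\delta_{k'k}$ the vector $\phi$ agrees with $f$ on $K$. Multiplying the mean-value identity by $f_k$ and summing over $k\in K$ gives, for $i\in S\cup T$,
\[
\phi_i=\sum_{j\in V}\tilde{P}_{ij}\,\phi_j .
\]
On the other hand, directly from $N_{ij}=\tilde{P}_{ij}f_j/f_i$ one has $\sum_{j\in V}N_{ij}=\frac{1}{f_i}\sum_{j\in V}\tilde{P}_{ij}f_j$, so the left-hand condition $\sum_{j}N_{ij}=1$ is equivalent to $f_i=\sum_{j\in V}\tilde{P}_{ij}f_j$. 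Thus the two sides of \eqref{eq:equiv} are just two encodings of the statement that $f$ is $\tilde{\mat{P}}$-harmonic at the interior node $i$ with boundary data $f|_K$: the right side exhibits $\phi$ as the explicit harmonic extension, while the left side is the harmonicity equation itself.

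To close the equivalence I would impose the condition across the whole interior $S\cup T$, which is precisely how the proposition is used downstream (to conclude both the stochasticity \eqref{eq:matrixN0} and the uniqueness of \eqref{eq:genpotential}). The system $f_i=\sum_{j\in V}\tilde{P}_{ij}f_j$ for all $i\in S\cup T$ is the same as $(\I-\tilde{\mat{P}}_{(S\cup T)(S\cup T)})\,f|_{S\cup T}=\tilde{\mat{P}}_{(S\cup T)K}\,f|_K$, and this linear system has a \emph{unique} solution because $\mat{G}=(\I-\mat{P}_{TT})^{-1}$ exists on $V_K$ (equivalently $\I-\tilde{\mat{P}}_{(S\cup T)(S\cup T)}$ is invertible, with inverse built from $\mat{G}$). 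Comparing with the mean-value identity shows that this unique solution is exactly $\phi$. Hence $\sum_j N_{ij}=1$ for the interior rows if and only if $f=\phi$, i.e. $f_i=\sum_{k\in K}\bar{F}_{ik}f_k$.

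The hard part is exactly this coupling: the row sum of $\mat{N}$ at $i$ involves the values of $f$ at the transient neighbours of $i$, so the ``only if'' cannot be closed purely locally at $i$ — it relies on invertibility of the discrete Laplacian $\I-\tilde{\mat{P}}_{(S\cup T)(S\cup T)}$ to pin the harmonic extension down globally, which is the substantive input. The remaining points are routine bookkeeping that I would mention but not belabor: positivity of $f_i$ (so that dividing by $f_i$ is legitimate), which follows from $f_k>0$ and $\bar{F}_{iK}>0$ on $V_K$, and well-definedness in the limit $\mu\downarrow 0$, which is handled by the estimates of Appendix~\ref{subsec:damping0}.
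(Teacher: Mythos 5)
Your proof is correct and follows essentially the same route as the paper's: both rewrite row-stochasticity of $\mat{N}$ at the interior nodes as $\tilde{\mat{P}}$-harmonicity of $\vec{f}$ (with the quantifier over all $i\in S\cup T$, as you rightly note it must be read), and both then use invertibility of $\I-\mat{P}_{TT}$ to pin $\vec{f}$ on $S\cup T$ down to $\bar{\mat{F}}\vec{f}_K$. The only difference is cosmetic: the paper solves the harmonic system forward ($\vec{f}_T=\mat{G}\mat{P}_{TK}\vec{f}_K$, then substitutes into the $S$ rows), whereas you verify that $\bar{\mat{F}}\vec{f}_K$ is a solution via the mean-value identity for $\bar{\mat{F}}$ and invoke uniqueness --- the same linear algebra packaged as existence-plus-uniqueness rather than as an explicit solution.
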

\begin{proof}
Write the vector $\vec{f}$ as $\vec{f} = [\vec{f}_{S}, \vec{f}_{T}, \vec{f}_K]^T$ and the matrix $\bar{\mat{F}}$ as $\bar{\mat{F}} = \left[\bar{\mat{F}}_{SK}, \bar{\mat{F}}_{TK}, \bar{\mat{F}}_{KK} \right]$, where $\bar{\mat{F}}_{SK}=\mat{P}_{ST}\mat{G}\mat{P}_{TK} + \mat{P}_{SK}$, $\bar{\mat{F}}_{TK}=\mat{G}\mat{P}_{TK}$ and $\bar{\mat{F}}_{KK}=\I$. The right equality from \eqref{eq:equiv} can then be written in the block matrix form as $\vec{f}_T = \bar{\mat{F}}_{TK}\vec{f}_K$, and $\vec{f}_S = \bar{\mat{F}}_{SK}\vec{f}_K$.

By definition of $\mat{N}$, our premise $\sum_{j\in V} N_{ij} = 1$ is equivalent to 
\begin{equation}\label{eq:equilibrium1}
f_i = \sum_{j\in T} P_{ij}f_j + \sum_{k\in K} P_{ik}f_k. 
\end{equation}
For $i\in T$, Equation \eqref{eq:equilibrium1} can be expressed in matrix form as $\vec{f}_T = \mat{P}_{TT}\vec{f}_T + \mat{P}_{TK}\vec{f}_K$, that is,
$(\I-\mat{P}_{TT})\vec{f}_T = \mat{P}_{TK}\vec{f}_K$. Since the matrix $\I-\mat{P}_{TT}$ is invertible by our assumption of connectivity, this is further equivalent to  
\begin{equation}\label{eq:equilibrium3}
\vec{f}_T = \mat{G}\mat{P}_{TK}\vec{f}_K = \bar{\mat{F}}_{TK}\vec{f}_K.
\end{equation}
For $i\in S$, Equation \eqref{eq:equilibrium1} can be written as $\vec{f}_S = \mat{P}_{ST}\vec{f}_T + \mat{P}_{SK}\vec{f}_K$, which using \eqref{eq:equilibrium3} is equivalent to
\begin{equation}
\vec{f}_S = \mat{P}_{ST}\mat{G}\mat{P}_{TK}\vec{f}_K + \mat{P}_{SK}\vec{f}_K = \bar{\mat{F}}_{SK}\vec{f}_K,
\end{equation}
as required.
\end{proof}

\paragraph*{Proof of Proposition \ref{prop:normop}}
\begin{proof} All properties follow from the fact that the transformation from $\tilde{\mat{P}}$ to $\mat{N}$ is a similarity transformation. \\
\textit{(i)} Let $i,j\in T$. We have
\begin{align*}
[\mat{G}(\mat{N})]_{ij} = \sum_{n=0}^\infty [\mat{N}_{TT}^n]_{ij} = \sum_{n=0}^\infty \frac{[\mat{P}_{TT}^n]_{ij}f_j}{f_i} = \frac{G_{ij} f_j}{f_i}.
\end{align*}
\textit{(ii)} Let $k\in K$ and suppose $i\in K$. Then $[\mat{\bar{F}}(\mat{N})]_{ik}=\delta_{ik}= \frac{\delta_{ik}f_k}{f_i}=\frac{\bar{F}_{ik}f_k}{f_i}$. Now suppose $i\in T$. Then, 
\begin{align*}
[\mat{\bar{F}}(\mat{N})]_{ik}= [\mat{G}(\mat{N})\mat{N}_{TK}]_{ik} = \sum_{j\in T} \frac{G_{ij}f_j}{f_i} \frac{P_{jk}f_k}{f_j} = \frac{\bar{F}_{ik}f_k}{f_i}.
\end{align*}
If $i\in S$, we have
\begin{align*}
[\mat{\bar{F}}(\mat{N})]_{ik}= [\mat{N}_{SK} + \mat{N}_{ST}\mat{G}(\mat{N})\mat{N}_{TK}]_{ik} = \frac{P_{ik}f_k}{f_i} + \sum_{j\in T}\sum_{l\in T} \frac{P_{ij}f_j}{f_i} \frac{G_{jl}P_{lk}f_k}{f_j} = \frac{\bar{F}_{ik}f_k}{f_i}.
\end{align*}
\textit{(iii)} Let $s\in S$ and suppose $i\in S$. Then  $[\mat{\bar{H}}(\mat{N})]_{si}=\delta_{si}= \frac{\delta_{si}f_i}{f_s} = \frac{\bar{H}_{si}f_i}{f_s}$. Now suppose $i\in K$. Then $[\mat{\bar{H}}(\mat{N})]_{si}=[\mat{\bar{F}}(\mat{N})]_{si} = \frac{\bar{F}_{si}f_i}{f_s} = \frac{\bar{H}_{si}f_i}{f_s}$. If $i\in T$, 
\begin{align*}
[\mat{\bar{H}}(\mat{N})]_{si} = [\mat{N}_{ST}\mat{G}(\mat{N})]_{si} = \sum_{j\in T} \frac{P_{sj}f_j}{f_s} \frac{G_{ji}f_i}{f_j} = \frac{\bar{H}_{si}f_i}{f_s}.\end{align*}
\textit{(iv)} Let $s\in S$, $i\in V$ and $k\in K$. Then,
\begin{align*}
[\bPhiT(\mat{N})]^{s}_{i,k} & = [\mat{\bar{H}}(\mat{N})]_{si}[\mat{\bar{F}}(\mat{N})]_{ik} =  \frac{\bar{H}_{si}f_i}{f_s} \frac{\bar{F}_{ik}f_k}{f_i} = \PhiT{i}{s}{k}\frac{f_k}{f_s}.
\end{align*}
\end{proof}

\section{Rapid Evaluation of Submatrix Inverses}\label{app:invcompute}

Consider an invertible block matrix $\mat{M}=\left[ \begin{array}{cc}\mat{A} & \mat{B}\\ \mat{C} & \mat{D}\end{array}\right]$, where $\mat{A}$ is a square matrix. It is a well known result of linear algebra (see for example \citet{PTVF07}, 2.7.4) that the inverse of $\mat{M}$ can be written as
\begin{equation}\label{eq:blockinv}
 \mat{M}^{-1} = \left[ \begin{array}{cc}\mat{A}^{-1}+\mat{A}^{-1}\mat{B}\mat{Q}^{-1}\mat{C}\mat{A}^{-1} & -\mat{A}^{-1}\mat{B}\mat{Q}^{-1}\\ -\mat{Q}^{-1}\mat{C}\mat{A}^{-1} & \mat{Q}^{-1}\end{array}\right],
\end{equation}
where $\mat{Q} = \mat{D}-\mat{C}\mat{A}^{-1}\mat{B}$. Suppose we are interested in computing matrices of the form $\mat{A}^{-1}\mat{U}$, where $\mat{A}$ is very large and $\mat{U}$ is an arbitrary matrix with appropriate number of rows. If it is necessary to perform a large number of such computations with different square submatrices $\mat{A}$ (the matrix $\mat{M}$ may be permuted in each case to reorder the indices), it could be effective to precompute the matrix $\mat{M}^{-1}$ (or, computationally more appropriately, its LU-decomposition) once and in each case extract the required inverse $\mat{A}^{-1}$ through simple and relatively inexpensive algebraic manipulations and permutations.

Indeed, write $\mat{M}^{-1}=\left[ \begin{array}{cc}\mat{X} & \mat{Y}\\ \mat{Z} & \mat{W}\end{array}\right]$, with each of the blocks known and with the block sizes the same as that in Equation (\ref{eq:blockinv}). One observes that $\mat{W}=\mat{Q}^{-1}$ and hence $\mat{Y}\mat{W}^{-1}\mat{Z}= \mat{A}^{-1}\mat{B}\mat{Q}^{-1}\mat{C}\mat{A}^{-1}$. Therefore,
\begin{equation}\label{eq:invextract}
\mat{A}^{-1} = \mat{X} - \mat{Y}\mat{W}^{-1}\mat{Z},
\end{equation}
Since $\mat{W}$ is assumed to be much smaller in size than $\mat{A}$, this gives rise to a rapid inverse formula with only index permutation needed. This method was mentioned earlier in a similar context by \citet{ZBY07}.

%
%
 


\end{document}